\newif\ifsigalt\sigalttrue
\begin{document}

\setlength{\pdfpageheight}{\paperheight}
\setlength{\pdfpagewidth}{\paperwidth}
\setlength{\belowdisplayskip}{0ex} \setlength{\belowdisplayshortskip}{0ex}
\setlength{\abovedisplayskip}{0ex} \setlength{\abovedisplayshortskip}{0ex}
\setlength{\floatsep}{0ex}
\setlength{\intextsep}{0ex}
\setlength{\textfloatsep}{0ex}
\setlength{\abovecaptionskip}{0ex} \setlength{\belowcaptionskip}{1ex}

\conferenceinfo{SPAA'16}{June 27 -- 29, 2016, Asilomar State Beach,California, USA}
\copyrightyear{20yy}
\copyrightdata{978-1-nnnn-nnnn-n/yy/mm} 
\copyrightdoi{nnnnnnn.nnnnnnn}
\publicationrights{licensed}

\title{Extending the Nested Parallel Model to the Nested Dataflow
 Model with Provably Efficient Schedulers}

\authorinfo{David Dinh}
    {Computer Science Division, EECS,
    University of California, Berkeley,
    Berkeley, CA 94720, USA}
    {dinh@cs.berkeley.edu}
\authorinfo{Harsha Vardhan Simhadri}
    {Computer Science Department,
    Lawrence Berkeley National Lab, Berkeley, CA 94720, USA}
    {harshas@lbl.gov}
\authorinfo{Yuan Tang
\thanks{All the coauthors contributed equally to this paper. 
Yuan Tang is the corresponding author. Part of the work was done
when the author was a visiting scientist at MIT CSAIL.}}
    {School of Software, Fudan University, 
    Shanghai Key Lab. of Intelligent Information Processing,
    Shanghai 200433, P. R. China}
    {yuantang@fudan.edu.cn}

\maketitle

\begin{abstract}
The nested parallel (a.k.a. fork-join) model is widely used for writing 
parallel programs. However, the two composition constructs, i.e.
``$\parallel$'' (parallel) and ``$\serial$'' 
(serial), are insufficient in expressing ``partial dependencies'' 
or ``partial parallelism'' in a program.
We propose a new dataflow composition construct ``$\fire$'' to
express partial dependencies in algorithms in a processor- and 
cache-oblivious way, thus extending the Nested Parallel (NP) model
to the \emph{Nested Dataflow} (ND) model.  
We redesign several divide-and-conquer algorithms ranging 
from dense linear algebra to dynamic-programming in the ND model
and prove that they all have optimal span while
retaining optimal cache complexity.
We propose the design of runtime schedulers that map ND programs to
multicore processors with multiple levels of possibly shared caches
(i.e, Parallel Memory Hierarchies~\cite{AlpernCaFe93}) and provide
theoretical guarantees on their ability to preserve locality and load
balance.  For this, we adapt space-bounded (SB) schedulers for the ND
model.  We show that our algorithms have increased
``parallelizability'' in the ND model, and that SB schedulers can use
the extra parallelizability to achieve asymptotically optimal bounds
on cache misses and running time on a greater number of processors
than in the NP model. The running time for the algorithms in this
paper is $O\left(\frac{\sum_{i=0}^{h-1}
  \cc(\Qt;\sigma\cdot M_i)\cdot C_i}{p}\right)$, where $\cc$ is the
cache complexity of task $\Qt$, $C_i$ is the cost of cache miss at
level-$i$ cache which is of size $M_i$, $\sigma\in(0,1)$ is a
constant, and $p$ is the number of processors in an $h$-level cache
hierarchy.
\punt{
For a simpler machine model, by extending the randomized work-stealing
scheduler, we bound the execution time to be $O(T_1/P + T_\infty + 
\log (1/\epsilon))$
with high probability, where $T_1$ and $T_\infty$ are work and
span of the ``enabling tree'' respectively. 
We also prove a cache miss bound of $Q_P = Q_1 + O((P T_\infty + 
N_{\text{sink}}) M/B)$, where $N_{\text{sink}}$ is the number of ``sink'' tasks.
 The bounds of SB scheduler.
} 
\end{abstract}

\vspace{-5pt}

\category{D.1.3}{Programming Techniques}{Concurrent Programming}[Parallel programming]
\category{G.1.0}{Mathematics of Computing}{Numerical Analysis}[Parallel Algorithms].
\category{G.4}{Mathematical Software}{}[Algorithm design and analysis]


\vspace{-5pt}
\keywords{
Parallel Programming Model,
Fork-Join Model,
Data-Flow Model,
Space-Bounded Scheduler,
Cache-Oblivious Parallel Algorithms,
Cache-Oblivious Wavefront,
Numerical Algorithms, 
Dynamic Programming,
Shared-memory multicore processors.
}

\secput{intro}{Introduction} A parallel algorithm can be represented
by a directed acyclic graph (DAG) that contains only \defn{data
  dependencies}, but not the control dependencies induced by the
programming model. We call this the \defn{algorithm DAG}.  In an
algorithm DAG, each vertex represents a piece of computation without
any parallel constructs and each directed edge represents a data
dependency from its source to the sink vertex. For example,
\figref{lcs-algo-dag} is the algorithm DAG of the dynamic programming
algorithm for the Longest Common Subsequence (LCS) problem.  This DAG
is a 2D array of vertices labeled $X(i,j)$, where the values with
coordinates $i = 0$ or $j = 0$ are given.  For all $i, j > 0$, vertex
$X(i,j)$ depends on vertices $X(i-1,j-1), X(i,j-1)$ and $X(i-1,j)$.
In an algorithm DAG, there are two possible relations between any pair
of vertices $x$ and $y$. If there is a path from $x$ to $y$ or from
$y$ to $x$, one of them must be executed before the other, i.e. they
have to be serialized; otherwise, the two vertices can run
concurrently. 
 
It is often tedious to specify the algorithm DAG by listing individual
vertices and edges, and in many cases the DAG is not fully known until
the computation has finished. Therefore, higher level programming
models are used to provide a description of a possibly dynamic DAG.
One such model is the \defn{nested parallel programming model} (also
known as fork-join model) in which DAGs can be constructed through
recursive compositions based on two constructs, ``$\parallel$''
(``parallel'') and ``$\serial$'' (``serial''). In the nested parallel
(NP) model, an algorithm DAG can be specified by a \defn{spawn tree},
which is a recursive composition based on these two constructs.  The
internal nodes of the spawn tree are serial and parallel composition
constructs while the leaves are strands --- segments of serial code
that contain no function calls, returns, or {\it spawn} operations. In
a spawn tree, $a\serial b$ is an infix shorthand for a node with
$\serial$ operator and $a$ and $b$ as left and right children and
indicates that $b$ has a dependence on $a$ and thus cannot start until
$a$ finishes, while $a \parallel b$ indicates that $a$ and $b$ can run
concurrently.

\noindent
\begin{figure*}[th]
\vspace{-1em}
\begin{minipage}[b]{0.19 \linewidth}
\subfloat[Algorithm DAG]{
\label{fig:lcs-algo-dag}
\includegraphics[width=0.96\linewidth]{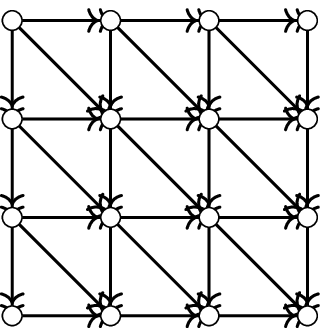}
} 
\end{minipage}
\hfill
\begin{minipage}[b]{0.19 \linewidth}
\subfloat[Divide-and-Conquer.]{ 
\label{fig:lcs-dag-art}
\includegraphics[width=0.96\linewidth]{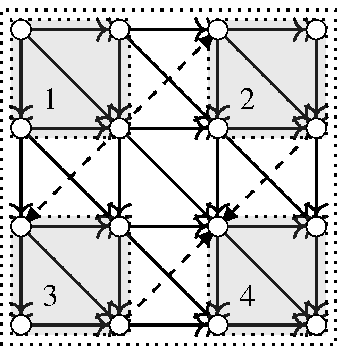}
} 
\end{minipage}
\hfill
\begin{minipage}[b]{0.55 \linewidth}
\subfloat[Spawn Tree]{
\includegraphics[width=0.96\linewidth]{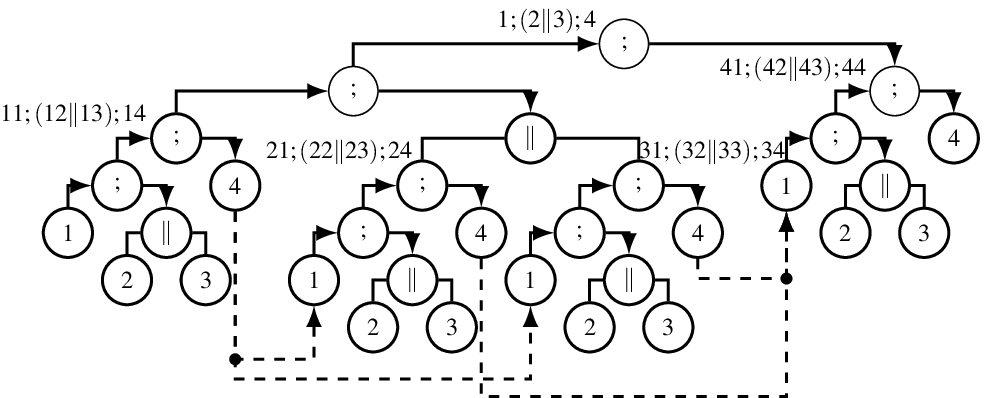}
} 
\end{minipage}
\vspace{0.5ex}
\caption{
Algorithm DAG and the spawn tree of the LCS algorithm in the NP
model. The labels $1,2,3,4$ in the DAG correspond to the four
quadrants in its decomposition. The dashed arrows in the DAG are
artifical dependencies induced when the algorithm DAG is expressed in
the nested parallel model.  The leaves of the spawn tree are smaller
LCS tasks while the internal nodes are composition constructs. The
numerical labels in the spawn tree represent the quadrant of the
dynamic programming table in~\figref{lcs-dag-art} they correspond
to. Solid arrows represent the dataflow indicated by the ``$\serial$''
constrct, and dashed arrows represent artificial dependencies.
\label{fig:lcs-np-tree}
}
\vspace{-3ex}
\end{figure*}

To express the LCS algorithm in the NP model, one might decompose the
2D array of vertices in the algorithm DAG into four smaller blocks,
recursively solve the smaller instances of the LCS algorithms on these
blocks, and compose them to by specifying the dependencies between
them using $\serial$ or $\parallel$ constructs.  \figref{lcs-np-tree}
illustrates the resulting spawn tree up to two levels of
recursion. The NP model demands a serial composition between two
subtrees of the spawn tree even if there is \defn{partial dependency}
between them: that is, a subset of vertices in the DAG corresponding
to one of the subtrees depends on a subset of vertices corresponding
to the other.  As a result, while the spawn tree in the NP programming
model can accurately retain the data dependencies of the algorithm
DAG, it also introduces many \defn{artificial dependencies} that are
not necessary to maintain algorithm correctness.  Artificial
dependencies induced by the NP programming model between subtrees of
the spawn tree in \figref{lcs-np-tree} are shown overlaid by dashed
arrows onto the algorithm DAG in \figref{lcs-dag-art}. Many parallel
algorithms, including dynamic programming algorithms and direct
numerical algorithms, have artificial dependencies when expressed in
the NP programming model than increase the span of the algorithm
(e.g. the span of LCS in the NP model is $O(n\log n)$ as opposed to
$O(n)$ of its algorithm DAG).  \textit{The insufficiency of the nested
  parallel programming model in expressing partial dependencies in a
  spawn tree} is the fundamental reason that causes artificial
dependencies between subtrees of the spawn tree.  This deficiency not
only limits the parallelism of the algorithms exposed to schedulers,
but also makes it difficult to simultaneously optimize for multiple
complexity measures of the program, such as span and cache
complexity~\cite{TangYoKa15}; previous empirical studies on scheduling
NP programs have shown that this deficiency can inhibit effective load
balance~\cite{SimhadriBlFi15}.

\punt{ Therefore, it is natural to ask: \textbf{Is there a programming
    model in which one can write processor- and cache-oblivious
    algorithms without introducing artificial dependencies or
    compromising on locality? Can the programming model be backed by
    provably efficient schedulers?}

\para{Key Observations: } The insufficiency of NP programming model in
expressing ``partial dependency'' in a spawn tree is the fundamental
reason that causes artificial dependency in many divide-and-conquer
algorithms.  The two constructs of the NP model (``$\parallel$'' or
``$\serial$'') are too coarse to precisely express the partial
dependencies that arise in the spawn tree of many algorithms.

The partial dependence patterns that occur in many algorithms
including dynamic programming and direct linear algebra algorithms
have a certain recursive structure that is easy for programmers to
sketch.

To express such patterns, we augment the serial and parallel
constructs with a third $\fire$ (``fire'' ) construct that allows the
specification of finer data dependency patterns between subtrees of a
spawn tree in a recursive manner.  We call this extension the {\bf
  Nested Dataflow} programming model, a natural extension of both the
nested parallel and data parallel programming models.  This paper
describes the semantics of our new $\fire$ composition structure,
demonstrates its usage in the context of some regular algorithms, and
extends known schedulers to the new programming model and presents
bounds on their performance.  }

\para{Our Contributions: }
\begin{itemize}
\item \para{Nested Dataflow model.}  We introduce a new \defn{fire}
  construct, denoted ``$\fire$'', to compose subtrees in a spawn
  tree. This construct, in addition to the $\parallel$ and $\serial$
  constructs, forms the \defn{nested dataflow (ND)} model, an
  extension of the nested parallel programming model. The ``$\fire$''
  construct allows us to precisely specify the partial dependence
  patterns in many algorithms that $\parallel$ and $\serial$
  constructs cannot.  One of the design goals of the ND programming
  model is to allow runtime schedulers to execute inter-processor work
  like a dataflow model, while retaining the locality advantages of
  the nested parallel model by following the depth-first order of
  spawn tree for intra-processor execution.

\item \para{DAG Rewriting System (DRS).} We provide a DAG Rewriting
  System that defines the semantics of the ``$\fire$'' construct by
  specifying the algorithm DAG that is equivalent to a dynamic spawn
  tree in the ND model.
		
\item \para{Re-designed divide-and-conquer algorithms.}  We re-design
  several typical divide-and-conquer algorithms in the ND model
  eliminating artificial dependencies and minimizing span. The set of
  divide-and-conquer algorithms ranges from dense linear algebra to
  dynamic programming, including triangular system solver, Cholesky
  factorization, LU factorization with partial pivoting,
  Floyd-Warshall algorithm for the APSP problem, and dynamic
  programming for LCS.  Our critical insight is that the data
  dependencies in all these algorithm DAGs can be precisely described
  with a small set of recursive partial dependency patterns (which we
  formalize as \textit{sets of fire rules}) that allows us to specify
  them compactly without losing any locality or parallelism. Other
  algorithms such as stencils and fast matrix multiplication can also
  be effectively described in this model.

\begin{figure}[!t]
\includegraphics[width=0.7\columnwidth]{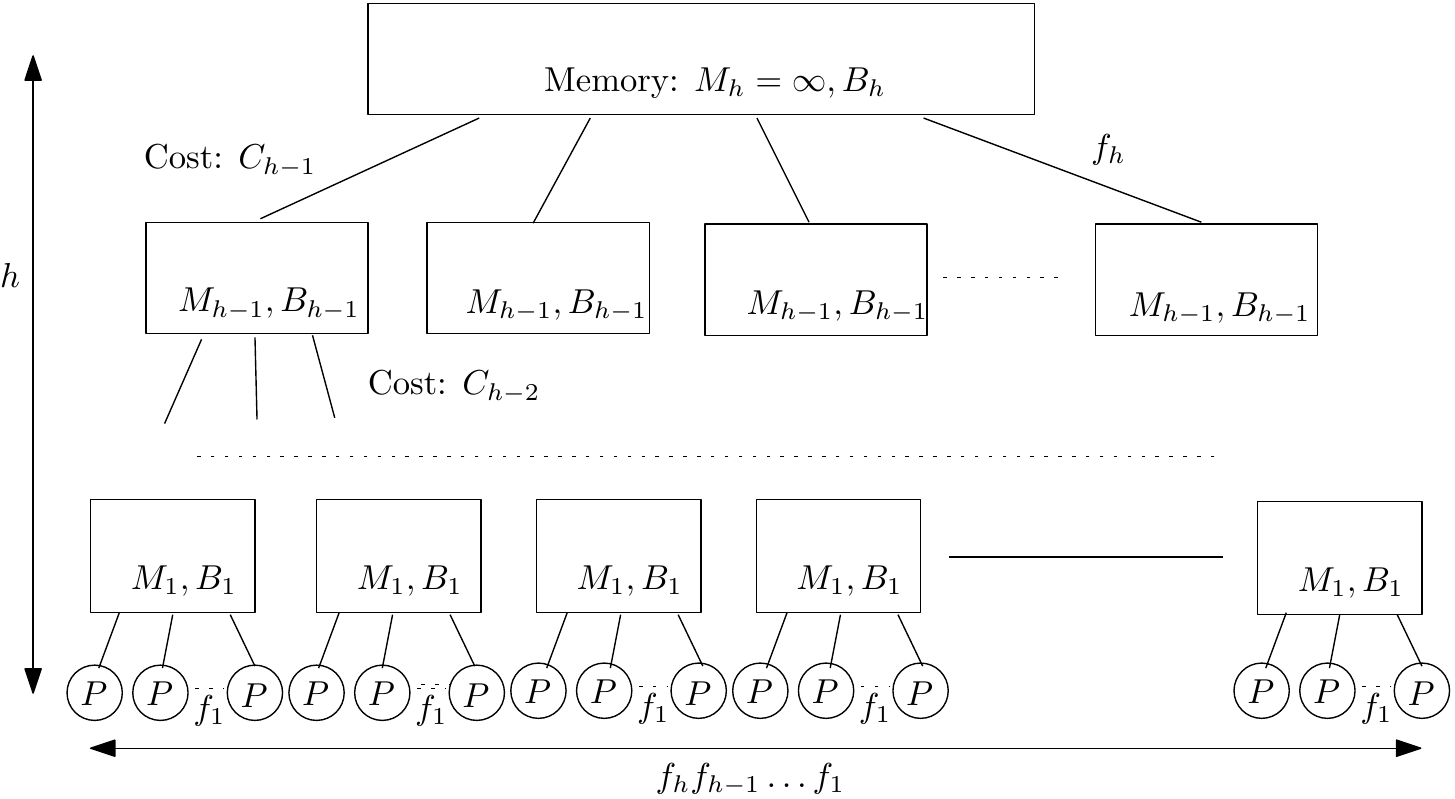}
\caption{An $h$-level parallel memory hierarchy machine model.}
\label{fig:PMH}
\end{figure}

\item \para{Provably Efficient Runtime Schedulers.}  The NP model has
  robust schedulers that map programs to shared-memory multicore
  systems, including those with hierarchical caches
  \cite{BlumofeLe99,ChowdhurySiBl13,BlellochFiGi11}.  These schedulers
  have strong performance bounds for many programs based on complexity
  measures such as \textit{work, span}, and \textit{cache complexity}
  \cite{BlumofeLe93,AroraBlPl98,BlellochGiMa99,Narlikar99b,
    AcarBlBl00,BlellochGi04,BlellochFiGi11,Simhadri13}.  We propose an
  extension of one such class of schedulers called the space-bounded
  schedulers for the ND model and provide provable performance
  guarantees on its performance on the Parallel Memory Hierarchy
  machine model (see~\figref{PMH}).  This machine model accurately
  reflects modern share memory multicore processors in that it has
  multiple levels of possibly shared caches. We show that the
  algorithms in~\secref{num} have greater ``parallelizability'' in the
  ND model than in the NP model, and that the space-bounded schedulers
  can use the extra parallelizability to achieve asymptotically
  optimal bounds on total running time on a greater number of
  processors than in the NP model for ``reasonably regular''
  algorithms. The running time for all the algorithms in this paper is
  asymptotically optimal: $O\left(\frac{\sum_{i=0}^{h-1}
    \cc(\Qt;\sigma\cdot M_i)\cdot C_i}{p}\right)$, where $\cc$ is the
  cache complexity of the algorithm, $C_i$ is the cost of a cache miss
  at level-$i$ cache which is of size $M_i$, $\sigma\in(0,1)$ is a
  constant, and $p$ is the number of processors in an $h$-level cache
  hierarchy.  
  When the input size $N$ is greater than $M_{h-1}$, the size of the
  highest cache level, (below the infinite sized RAM which forms the
  root of the hierarchy), the SB scheduler for the ND model can
  efficiently use all the processors attached to up to
  $N^{1-c}/M_{h-1}$ level-$(h-1)$ caches, where $c$ is an arbitrarily
  small constant.
This compares favorably with the SB scheduler for the NP model
\cite{BlellochFiGi11}, which, for the algorithms in the paper,
requires an input size of at least $M_{h-1}^2$ before it can
asymptotically match the efficiency of the ND version.
  \punt { 
  work-stealing scheduler, we bound the execution time to $O(T_1/P +
  T_\infty + \log (1/\epsilon))$ with high probability, where $T_1$
  and $T_\infty$ are work and span of the ``enabling tree''
  respectively. We also prove a cache miss bound of $Q_P = Q_1 + O((P
  T_\infty + N_{\text{sink}}) M/B)$, where $N_{\text{sink}}$ is the
  number of ``sink'' tasks.  } 
\end{itemize}

\secput{model}{Nested Dataflow Model}
The nested dataflow model extends the NP model by introducing an
additional composition construct, ``$\fire$'', which generalizes the
existing ``$\parallel$'' and ``$\serial$'' constructs. Programs in
both the NP and ND models are expressed as spawn trees, where the
internal nodes are the composition constructs and the leaf nodes are
strands. We refer to subtrees of the spawn tree as \defn{tasks} or
\defn{function calls}. We refer to the subtree rooted at the $i$-th child of an
internal node as its $i$-th \defn{subtask}. In both the models, larger tasks can
be defined by composing smaller tasks with the ``$\serial$'' and ``$\parallel$''
constructs. The ND model allows tasks to be defined as a composition using the
additional \textbf{binary} construct, ``$\fire$'', which enables the specification
of  ``partial dependencies'' between subtasks. This represents an arbitrary 
middle-point between the ``$\serial$'' construct (full dependency) and the
``$\parallel$'' construct (zero dependency).

For example, consider the program represented by the spawn tree
in \figref{code-fire-tree}.  The entire program, $\proc{main}$, is
comprised of two tasks $\proc{F}$ and $\proc{G}$.  Task $\proc{F}$ is
the serial composition of tasks $\proc{a}$ and $\proc{c}$, and task
$\proc{G}$ is the serial composition of $\proc{b}$ and
$\proc{d}$. Task $\proc{C}$ depends on $\proc{A}$, which creates a
partial dependency from $\proc{F}$ to $\proc{G}$. Instead of using a
``$\serial$'' construct, which would block $\proc{d}$ until the
completion of $\proc{F}$ (including both $\proc{a}$ and $\proc{b}$),
we denote the partial dependency with the ``$\fire$'' construct
in \figref{code-fire-tree}.

\noindent
\hspace*{-1cm}
\begin{figure}[!h]
\begin{minipage}[b]{0.27 \linewidth}
\begin{tabular}{l}
$\proc{main}() \{$ \\
$\ \ \ \proc{F}() \tfire{\proc{F}\proc{G}} \proc{G}()$\\
$\}$ \\
\end{tabular}
\end{minipage}%
\begin{minipage}[b]{0.20 \linewidth}
\begin{tabular}{l}
$\proc{F}() \{$ \\
$\ \ \ \proc{a}() \serial \proc{b}()$\\
$\}$\\
\end{tabular}
\end{minipage}%
\begin{minipage}[b]{0.20 \linewidth}
\begin{tabular}{l}
$\proc{G}() \{$ \\
$\ \ \ \proc{c}() \serial \proc{d}()$\\
$\}$\\
\end{tabular}
\end{minipage}%
\begin{minipage}[b]{0.33 \linewidth}
\begin{tabular}{l}
$\oone{+} \tfire{\proc{F}\proc{G}} \oone{-} = \{$ \\
$\ \ \ \otwo{+}{1} \serial \otwo{-}{1}$ \\
$\}$\\
\end{tabular}
\end{minipage}%
\caption{Code for \proc{main}, \proc{F},
\proc{G}, and a fire rule.
\punt{Since there is no recursion in
this example, for any ``$\fire$'' operator at leaf
level, it reduces to a ``$\serial$'' primitive.}
}
\label{fig:code-fire-example}
\end{figure}

\punt{The spawn tree of \figref{code-fire-example} is
shown in \figref{code-fire-tree}. We can see that only
\proc{a} can execute without any dependency. After
\proc{a} is done, both \proc{b} and \proc{c}
are free to execute because they both directly depend on
\proc{a} by ``$\serial$'' primitive and ``$\fire$''
operator respectively. \proc{d} can execute right
after \proc{c}.}

We express this program as code in \figref{code-fire-example}. The partial
dependency from $\proc{F}$ to $\proc{G}$ is specified with the rule
$\tfire{\proc{F}\proc{G}}$. In order to specify that the only dependence 
is from $\proc{a}$, the first subtask of $\proc{F}$, to $\proc{c}$, the
first subtask of $\proc{G}$, we write 
\[
\oone{+}\tfire{\proc{F}\proc{G}}\oone{-} = \{\otwo{+}{1} \serial \otwo{-}{1} \}.
\]
The circled values denote \defn{relative pedigree}, or \defn{pedigree}
in short, which represents the position of a nested function call in a
spawn tree with respect to its ancestor \cite{LeisersonScSu12}. We use
wildcards $\oone{+}$ and $\oone{-}$ to represent the \defn{source}
and \defn{sink} of the partial dependency.  We then specify \defn{a
set of fire rules} to describe the partial dependence pattern of the
``$\fire$'' construct between the source and the sink nodes.  In the
above case, we used $\otwo{+}{1}$ to denote the first subtask of the
source, $\oone{+}$; similarly, $\otwo{-}{1}$ denotes the first subtask
of the sink.  The semicolon indicates a full dependency between
them. In the context of $\proc{main}$ in \figref{code-fire-example},
$\oone{+}$ is bound to $\proc{F}$ and $\oone{-}$ to $\proc{G}$,
implying that there is a full dependency from $\otwo{F}{1}$, which
refers to $\proc{a}$, to $\otwo{G}{1}$, which refers to $\proc{c}$. In
the general case, we allow multiple rewriting rules in the definition
of a fire construct, and ``multilevel'' pedigrees
(e.g. $\othree{+}{2}{1}$ denotes the second subtask of the first
subtask of the source) in each rule.

In the previous example, the dependency from $\proc{a}$ to $\proc{c}$ is a 
full dependency; that is, the entirety of $\proc{a}$ must be completed before 
$\proc{c}$ can start. However, this dependency itself may be artificial.
Therefore, we allow the ``$\fire$'' construct to be recursively defined using 
fire rules that themselves represent partial dependencies.

Consider the following divide-and-conquer algorithm for computing the
matrix product $C += A\times B$, which we denote $\proc{MM}(A,B,C)$.
Let $C_{00}, C_{01}, C_{10}$ and $C_{11}$ denote the top left, bottom left,
top right, and the bottom right quadrants of $C$ respectively.
In the ND model, we can define  $\proc{MM}(A,B,C)$ to be
{\small
\begin{align*}
((\proc{MM}(A_{00},B_{00},C_{00})  &\parallel
\proc{MM}(A_{00},B_{01},C_{01}))
&// \othree{1}{1}{1} || \othree{1}{1}{2}
\\
\parallel ( \proc{MM}(A_{10},B_{00},C_{10})  &\parallel
\proc{MM}(A_{10},B_{01},C_{11})))
&// \othree{1}{2}{1} || \othree{1}{2}{2}
\\
\tfire{MM}
((\proc{MM}(A_{01},B_{10},C_{00})  &\parallel
\proc{MM}(A_{01},B_{11},C_{01})  )
&// \othree{2}{1}{1} || \othree{2}{1}{2}
\\
\parallel ( \proc{MM}(A_{11},B_{10},C_{10})  &\parallel
\proc{MM}(A_{11},B_{11},C_{11}))).
&// \othree{2}{2}{1} || \othree{2}{2}{2}
\end{align*}
}
Each quadrant of $C$ is written to by two of the eight subtasks;
each such pair of subtasks must be serialized to avoid a data race.
For this, we might naively define the fire construct ``$\tfire{MM}$''
between the immediate subtasks of $\proc{MM}(A,B,C)$ with a pair of
fire rules:
\[
\oone{+}\tfire{MM}\oone{-} \quad \{
 \otwo{+}{1} \serial  \otwo{-}{1}, \quad
 \otwo{+}{2} \serial  \otwo{-}{2} \}
\]
However, notice that the dependency between first subtasks (as well as
second subtasks), which is expressed with ``$\serial$'' in the code
above, is in reality a partial dependency. Furthermore, each of these
partial dependencies has the same pattern as ``$\tfire{MM}$''.  Since
this pattern repeats recursively down an arbitrary number of levels,
the ``$\tfire{MM}$'' construct should have been described by the fire
rules:
\begin{equation}
\oone{+}\tfire{MM}\oone{-} \quad \{
 \otwo{+}{1} \tfire{MM} \otwo{-}{1}, \quad
 \otwo{+}{2} \tfire{MM} \otwo{-}{2} \},
\end{equation}
wherein ``$\serial$'' is replaced by ``$\tfire{MM}$''. 

If the recursion terminates at the level indicated
in \figref{MM-tree}, the four instances of ``$\tfire{MM}$'' between
leaves of the spawn tree will be interpreted as four full dependencies
between the corresponding strands.  If the recursion continues, the
fire rules are used to further refine the dependencies.  Whereas this
algorithm has only one set of dependence patterns (fire rules), we
will see algorithms with multiple types of fire rules in the next
section.\\

\punt{Since there can be many ``$\fire$'' operators in an ND program
and ``$\fire$'' operators at different places may want to be rewritten
by different fire rules, ``$\fire$'' operator has to be
overloadable. For the purpose of clarity, we stack a type on top of
``$\fire$'' operator, like ``$\tfire{\proc{F}\proc{G}}$'' in
\figref{code-fire-example}, to denote which fire rule will be
used to rewrite which ``$\fire$'' operator. Since the type may
possibly be inferenced by compiler, an explicit type of fire rule may
not be necessary in a real system. In \figref{code-fire-example}, the
fire rule says that the first child of $\oone{-}$ depends on the first
child of $\oone{+}$. In this case, wildcard $\oone{+}$ matches with
function $\proc{F}$ and $\oone{-}$ matches with function
$\proc{G}$. The ``$\fire$'' construct specifies an recursive data
dependency pattern between its left and right subtasks.  ``$\fire$''
operator can be overloaded for different pairs of source and sink in
the same program and re-defined from program to program, while both
``$\parallel$'' and ``$\serial$'' \defn{primitives} have fixed
semantics and can not be changed.} 

\begin{figure}[t]
\vspace{-2ex}
\begin{minipage}{0.36\columnwidth}
\includegraphics[width=0.95\linewidth]{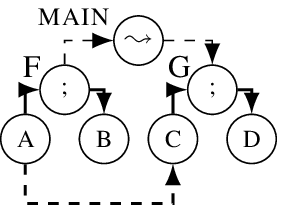}
\vspace{1ex}
\caption{Spawn tree corresponding to the code in \figref{code-fire-example}.}
\label{fig:code-fire-tree}
\end{minipage}
\hfil
\begin{minipage}{0.55\columnwidth}
\includegraphics[width=0.95\linewidth]{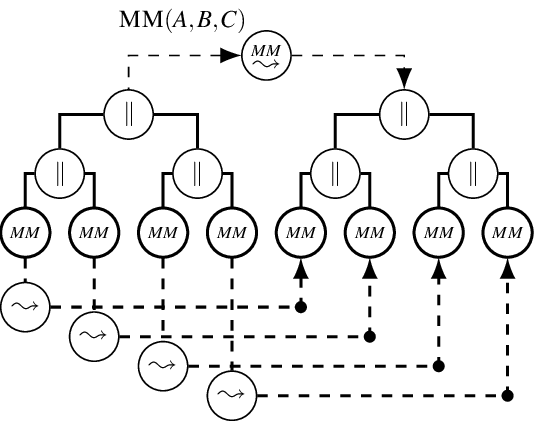}
\caption{Partial dependencies in the recursive matrix multiply algorithm.}
 \label{fig:MM-tree}
\end{minipage}
\end{figure}

\para{DAG Rewriting System (DRS).}
We specify the semantics of the ``$\fire$'' construct with a DRS that
defines the algorithm DAG corresponding to the spawn tree given at
runtime.
The spawn tree can unfold dynamically at runtime by
incrementally \defn{spawning} new tasks -- a spawn operation rewrites
a leaf of the spawn tree into an internal node by adding two new
leaves below. The composition construct in the internal nodes of the
spawn tree imply dependencies between its subtrees.  We represent
these dependencies as directed \defn{dataflow arrows} in the spawn
tree.  The equivalent algorithm DAG implied by the spawn tree is the
DAG with the leaves of the spawn tree as vertices, and edges
representing dataflow edges implied by both the serial and fire
constructs that are incident to the leaves of the spawn tree. The DAG
also grows with the spawn tree; new vertices are added to the DAG
whenever new tasks are spawned, and the construct used in the spawn
operation defines the edges between these new vertices in the
algorithm DAG.  Note that maintaining a full algorithm DAG at runtime
is not necessary. To save space, one can carefully design the order of
the execution of the spawn tree, and recycle the memory used to
represent parts of the spawn tree that have finished executing as
in \cite{BlumofeLe99, LeeBoHuLe10}. We will leave this for future
work.  Instead we focus here on the algorithm DAG to clarify the
semantics of the fire construct.

\punt{
Although the vertices in the Algorithm DAG correspond to leaves in the spawn
tree, we will not expand the entire spawn tree in order to save space. 
Though all leaves of a spawn tree constitute the Algorithm DAG which will
reveal all parallelism, we don't bother to expand the whole tree at
any time during runtime to save space. The research question then becomes 
if it is possible to construct the Algorithm DAG on a dynamically growing 
and shrinking (when some part is done) spawn tree. If we view the root of 
spawn tree as a one-vertex no-edge DAG, with the growing and shrinking of 
tree at runtime, we want a rewriting system to change the DAG accordingly
and always reveal all dependencies of corresponding Algorithm DAG.
}

The DRS iteratively constructs the dataflow edges, and equivalently
the algorithm DAG, by starting with a single vertex representing the
root of the spawn tree and successively applying \defn{DAG rewriting
rules}. Given a DAG $G$, a rewriting rule replaces a sub-graph that is
isomorphic to $L$ with a copy of sub-graph $R=\langle V,E \rangle$,
resulting in a new DAG $G'$. There are two rewriting rules:

\begin{enumerate}
 \item \defn{Spawn Rule}: A spawn rule corresponds to a spawn
        operation. Any current leaf of the spawn tree corresponds to a
        single-vertex no-edge subgraph $L
        = \langle \{\proc{a}\}, \emptyset \rangle$ of the DAG.  If it
        spawns, we rewrite the leaf as a (sub)tree rooted by either a
        ``$\serial$'', ``$\parallel$'' or ``$\fire$'' in the spawn
        tree.\footnote{ A leaf with a non-constant degree parallel
        construct such as a parallel for loop of tasks must be
        rewritten as an binary tree in our programming model.}  The
        root of the newly spawned (sub)tree inherits all incoming and
        outgoing dataflow arrows of the old leaf.
        For instance, if task \proc{a} spawns \proc{b} and \proc{c} in serial, 
        we rewrite the single-vertex, no-edge DAG $L$ to 
        $R = \langle \{\proc{b}, $``$\serial$''$, \proc{c}\},
        \overrightarrow{\proc{b} \proc{c}} \rangle$, where
        $\overrightarrow{\proc{b} \proc{c}}$ is a \defn{solid dataflow arrow} 
        (directed edge) from \proc{b} to \proc{c}
        ($\overrightarrow{\proc{b} \proc{c}}$ is actually a shorthand
        for all-to-all dataflow arrows from all possible descendants of
        \proc{b} to those of \proc{c}, i.e. $\proc{b} \times \proc{c}$).
        If task \proc{a} calls \proc{b} and \proc{c} in parallel, we rewrite it as
        $R = \langle \{\proc{b}, $``$\parallel$''$, \proc{c}\}, \emptyset \rangle$.
        While the parallel construct introduces no dataflow arrows between
        $\proc{b}$ and $\proc{c}$,  a rewriting rule from its closest ancestor 
        that is a ``$\fire$'' construct can introduce dataflow arrows to these
        two nodes according to the \textit{fire rule}.
        The semantics of non-binary serial and parallel constructs are similar. 
        If task \proc{a} invokes ``$\proc{b} \fire \proc{c}$'',
        we rewrite to $R = \langle \{\proc{b} \allowbreak,
        $``$\fire$''$ \allowbreak, \proc{c}\} \allowbreak,
        E' \allowbreak \subseteq \allowbreak\proc{b}
        \allowbreak \times \allowbreak \proc{c} \allowbreak \rangle$,
        where $E'$ is a \defn{dashed dataflow arrow} and is the subset of all
        possible arrows from descendants of \proc{b} to descendants of \proc{c} 
        to be defined by the fire rule as follows.
        \punt{ \figref{spawn-rule} illustrate all three cases of spawn rule.}

  \item \defn{Fire Rule}: Given a dashed dataflow arrow between arbitrary source
        and sink nodes, including those from the left child of a fire construct
        to its right child, we (recursively) rewrite the arrow using the set of
        fire rules associated with it. These rules specify how the ``$\fire$''
        construct is rewritten to a set of dataflow arrows between the 
        descendants of the source and the sink nodes, and their annotations.
        There are two possible cases for rewriting:
        
        \punt{ 
		\figref{fire-rule} illustrated one case of fire rule.
        Other cases of fire rule are similar.
        } 

	\begin{itemize} 

        \item If both operands \proc{a} and \proc{b} are strands, the
            dataflow arrow between them is rewritten as either
            ``$\proc{a} \serial \proc{b}$'' or, if the fire construct
            has no rewriting rules, ``$\proc{a} \parallel \proc{b}$''.

         \item If the source task \proc{a} of a ``$\fire$''
            construct is rewritten by a spawn rule into a tree
            containing $k$ subtasks, we add dataflow arrows
            $E' \subseteq \{\proc{a}_1, \dots\proc{a}_k\} \times \proc{b}$
            to the resulting DAG, i.e. $R = \langle V, E \cup
            E' \rangle$, where the arrows in $E'$ and their labels is
            determined based on the fire rules of the ``$\fire$''
            construct as follows: for a fire rule of the form
            $\otwo{+}{i}p \tfire{T} \oone{-}q$ (where $p$ and $q$ are
            some pedigrees) from $\proc{a}$ to $\proc{b}$, we add a
            dataflow arrow $\oone{+}p \tfire{T} \oone{-}q$ from
            $\proc{a}_i$ to $\proc{b}$.  An analogous rule applies
            when the sink spawns.
            
       \punt{we have $G' = \langle V, E \cup E' \rangle$, where the subset
        $E'$ is a subset of $ \proc{a} \times \{\proc{b}_0, \proc{b}_1\}$.
            
            \item Since the rewrite of a ``$\fire$'' operator adds a subset
            of all-to-all dashed dataflow arrows to the new DAG, a specification
            of fire rule contains a set of fire statements for every added
            dashed dataflow edges.
            \item For a series of chained ``$\fire$'' operators,
            i.e. ``$\proc{a} \fire \proc{b} \fire \proc{c} \ldots$'',
            we rewrite them nestedly from left to right as
            ``$((\proc{a} \fire \proc{b}) \fire \proc{c} \ldots)$''.} 
		\end{itemize}

    \punt{ 
	\item In summary: For spawn rules, we usually rewrite a leaf to a
        (set of) binary tree(s) with new vertices and (solid or
        dashed) arrows added to the DAG; The root of newly added
        (set of) binary tree(s) should inherit all incoming and
        outgoing edges of original leaf vertex; For fire rules,
		we usually just add new dashed arrows.
    } 
\end{enumerate}

From the DRS, it is evident that the binary ``$\serial$'' and ``$\parallel$''
constructs are special cases of the ``$\fire$'' construct. Four fire rules
that recursively refine between both pairs of subtasks of $\oone{+}$ and 
$\oone{-}$ define the $\serial$ construct, and an empty set of rules defines
``$\parallel$''. It is also straightforward to  replace higher-degree 
``$\serial$'' and ``$\parallel$'' constructs using  ``$\fire$'' if one so chooses.

\para{Work-Span Analysis. }
\emph{Work-Span analysis} is commonly used to analyze the complexity of an 
algorithm DAG. We use $T_1$ to denote a task's \defn{work}, that is, the total number of 
instructions it contains. We use $T_\infty$ to denote its \defn{span}, that is, the length 
of the critical path of its DAG.
The composition rule to calculate work $T_1$ for all three constructs 
of the ND model is always a simple summation. For example, if 
$\id{c}=\id{a}\fire\id{b}$, then $T_{1,\id{c}} = T_{1,\id{a}} + T_{1,\id{b}}$.
In principle, the composition to calculate the span $T_\infty$ for all
three constructs is the maximum length of all possible paths from source to sink,
i.e. the critical path. Since ``$\serial$'' and ``$\parallel$'' primitives 
have fixed semantics in all contexts, the span of tasks constructed with them can be simplified
as follows:
for $\id{c} = \id{a} \serial \id{b}$, 
$T_{\infty, \id{c}} = T_{\infty, \id{a}} + T_{\infty, \id{b}}$;
for $\id{c} = \id{a} \parallel \id{b}$, 
$T_{\infty, \id{c}} = \max\{T_{\infty, \id{a}},
T_{\infty, \id{b}}\}$.
On the other hand, since the semantics of a ``$\fire$'' construct
are parameterized by its set of fire rules, we have to calculate the
depth of the task constructed with it on a case-by-case basis.
For instance, for the code in \figref{code-fire-example}, we have
$T_{\infty, \proc{main}} = T_{\infty, \proc{F} \tfire{\proc{F}\proc{G}} \proc{G}}
= \max\{T_{\infty, \proc{a}} + T_{\infty, \proc{b}},
T_{\infty, \proc{a} \serial \proc{c}}+ T_{\infty, \proc{d}}\}$, where
$T_{\infty, \proc{a} \serial \proc{c}}$ is $T_{\infty,\proc{a}}+T_{\infty,\proc{c}}$.
If the rule ``$\tfire{\proc{F}\proc{G}}$'' were to  place a partial
dependence ``$\tfire{\proc{a}\proc{c}}$'' from $\proc{a}$ to $\proc{c}$,
the depth would have to be calculated by further recursive analysis.
\punt{Due to the restricted expressiveness of nested parallel programming model,
we introduce the \emph{nested dataflow} (ND) model. The ND model aims to maximize the
parallelism of inter-processor computation by dataflow execution
and maximize the cache-locality of each processor by preserving
a depth-first execution order with respect to spawn tree simultaneously.
To maximize the inter-processor parallelism, we want the
inter-processor execution to work in dataflow model, i.e.,
any task can start computation as soon as all its data dependencies are satisfied, in a
processor-oblivious way. By processor-oblivious, we mean that the execution 
pattern preserves regardless of how many processor the system has.
To maximize the locality in a cache- and processor-oblivious way, we keep the spawn
tree for a depth-first intra-processor execution order. By cache-oblivious,
we mean that the locality is achieved without any tuning on cache parameters.
If the semantics of ``$\serial$'' primitive indicates an all-to-all dependency or 
\defn{full dependency} (\defn{zero parallelism}) from all descendants of sink (right) 
on source (left) operand and the semantics of ``$\parallel$'' primitive indicates
\defn{zero dependency} (\defn{full parallelism}) between sink and source operands, 
the semantics of ``$\fire$'' operator indicates a \defn{partial dependency}
(\defn{partial parallelism}) between sink and source operands. A partial dependency
at recursion level-$i$ will be recursively defined as a set of partial dependencies
at recursion level-$(i+1)$ up to leaves.} 

\noindent
\begin{figure*}[!ht]
\vspace{-5ex}
\hspace*{-.4cm}
\begin{minipage}[b]{0.45 \linewidth}
\subfloat[Spawn tree of \proc{TRS} with only ``$\parallel$'' and ``$\serial$'' constructs in NP model]{
\label{fig:TRS-old-tree}
\includegraphics[width = 0.98\linewidth]{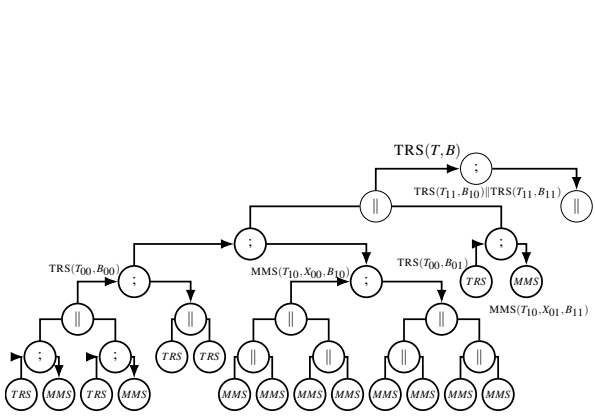}
} 
\end{minipage}
\hfil 
\hspace*{.8cm}
\begin{minipage}[b]{0.45 \linewidth}
\subfloat[Spawn Tree of \proc{TRS} with ``$\fire$'', ``$\parallel$'', and ``$\serial$'' constructs in ND model]{
\label{fig:TRS-new-tree}
\includegraphics[width = \linewidth]{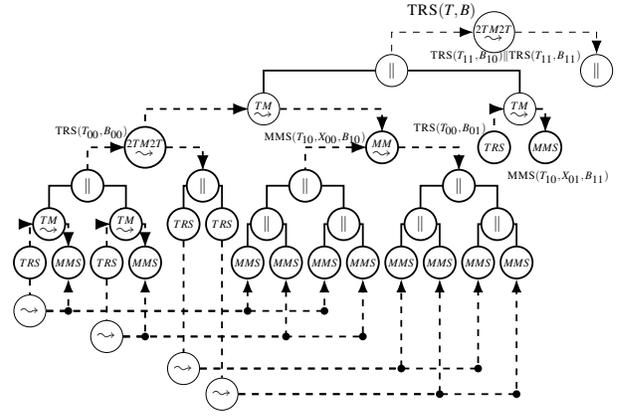}
} 
\end{minipage}
\vspace*{0.5ex}
\caption{Spawn trees of \proc{TRS} in the NP and ND models.  The shape
  of the tree and the leaves are the same between the two models,
  except that some of the $\serial$ constructs in NP model are relaxed
  with $\fire$ constructs and their dataflow arrows in the ND model.
  Dashed arrows corresponding to ``$\fire$'' constructs are recursively
  rewritten until both source and sink subtrees are leaves, where they
  are treated as solid arrows. For simplicity, the figure illustrates 
  only dataflow arrows of type $TM$ between the leaves, and omits 
  dataflow arrows of other types.}
  \vspace{-4ex}
\label{fig:TRS-tree}
\end{figure*}

\secput{num}{Algorithms in the ND Model}
In this section and, we express several typical $2$-way
divide-and-conquer classical linear algebra and dynamic programming
algorithms in the ND model.  These include Triangular System Solver,
Cholesky factorization, LU factorization with partial pivoting,
Floyd-Warshall algorithm for All-Pairs-Shortest-Paths, and LCS. Note
that in going from the NP to the ND model in these algorithms, the
cache complexity of the depth-first traversal will not change as we
leave the divide-and-conquer spawn tree unchanged. At the same time,
we demonstrate that the algorithms have improved parallelism in the ND
model. We do this by proving that their span is smaller than in the NP
model. We will develop more sophisticated metrics to quantify
parallelism in the presence of caches in~\secref{sb-sched}; it turns
out those metrics demonstrate improved parallelism in the ND model as
well.

\para{Triangular System Solver.}
We begin with the Triangular System Solver (\proc{TRS}). $\proc{TRS}(T, B)$
takes as input a lower triangular $n\times n$ matrix $T$ and a square
matrix $B$ and outputs a square matrix $X$ such that $TX = B$. A triangular
system can be recursively decomposed as shown in \eqref{TRS-matrix}.
\begin{figure}[!ht]
\begin{align}
    \begin{bmatrix}
        B_{00} & B_{01} \\
        B_{10} & B_{11}
    \end{bmatrix}
    &=
    \begin{bmatrix}
        T_{00} & 0 \\
        T_{10} & T_{11}
    \end{bmatrix}
    \begin{bmatrix}
        X_{00} & X_{01} \\
        X_{10} & X_{11}
    \end{bmatrix} \nonumber \\
    &=
    \begin{bmatrix}
        T_{00} X_{00} & T_{00} X_{01} \\
        T_{10} X_{00} + T_{11} X_{10} & T_{10} X_{01} + T_{11} X_{11}
    \end{bmatrix}
\label{eq:TRS-matrix}
\end{align}
\end{figure}

\punt{ 
\begin{figure}[!ht]
\begin{tikzpicture}
[inner sep=0mm
, minimum size=4.8mm
, level distance=6mm,
, every node/.style=tree node
, every label/.append style={draw=none, xshift=-1mm, yshift=-2mm}
]
\node[label={above left, xshift=-1mm, yshift=-1mm:$\proc{TRS}(T,B)$}](r){$\serial$}
[sibling distance=32mm]
child{node[]{$\parallel$}
    [sibling distance=40mm]
    child{node[]{$\serial$}
        [sibling distance=37mm]
        child{node[label={above left, xshift=-1mm, yshift=-2mm:\tiny{$\proc{TRS}(T_{00},B_{00})$}}]{$\serial$}
            [sibling distance=18mm]
            child{node[]{$\parallel$}
                [sibling distance=12mm]
                child{node[]{$\serial$}
                    [sibling distance=6mm]
                    child{node[]{\tiny{$TRS$}}
                    edge from parent[ser-in]}
                    child{node[]{\tiny{$MMS$}}
                    edge from parent[ser-out]}
                edge from parent[par-in]}
                child{node[]{$\serial$}
                    [sibling distance=6mm]
                    child{node[]{\tiny{$TRS$}}
                    edge from parent[ser-in]}
                    child{node[]{\tiny{$MMS$}}
                    edge from parent[ser-out]}
                edge from parent[par-out]}
            edge from parent[ser-in]}
            child{node[]{$\parallel$}
                [sibling distance=6mm]
                child{node[]{\tiny{$TRS$}}
                edge from parent[par-in]}
                child{node[]{\tiny{$TRS$}}
                edge from parent[par-out]}
            edge from parent[ser-out]}
        edge from parent[ser-in]}
        child{node[label={above left, xshift=-2mm, yshift=-4mm:\tiny{$\proc{MMS}(T_{10},X_{00},B_{10})$}}]{$\serial$}
            [sibling distance=24mm]
            child{node[tree node]{$\parallel$}
                [sibling distance=12mm]
                child{node[tree node]{$\parallel$}
                    [sibling distance=6mm]
                    child{node[tree node]{\tiny{$MMS$}}
                    edge from parent[par-in]}
                    child{node[tree node]{\tiny{$MMS$}}
                    edge from parent[par-out]}
                edge from parent[par-in]}
                child{node[tree node]{$\parallel$}
                    [sibling distance=6mm]
                    child{node[tree node]{\tiny{$MMS$}}
                    edge from parent[par-in]}
                    child{node[tree node]{\tiny{$MMS$}}
                    edge from parent[par-out]}
                edge from parent[par-out]}
            edge from parent[ser-in]}
            child{node[tree node]{$\parallel$}
                [sibling distance=12mm]
                child{node[tree node]{$\parallel$}
                    [sibling distance=6mm]
                    child{node[tree node]{\tiny{$MMS$}}
                    edge from parent[par-in]}
                    child{node[tree node]{\tiny{$MMS$}}
                    edge from parent[par-out]}
                edge from parent[par-in]}
                child{node[tree node]{$\parallel$}
                    [sibling distance=6mm]
                    child{node[tree node]{\tiny{$MMS$}}
                    edge from parent[par-in]}
                    child{node[tree node]{\tiny{$MMS$}}
                    edge from parent[par-out]}
                edge from parent[par-out]}
            edge from parent[ser-out]}
        edge from parent[ser-out]}
    edge from parent[par-in]}
    child{node[]{$\serial$}
        [sibling distance=8mm]
        child{node[label={above left, yshift=-1mm:\tiny{$\proc{TRS}(T_{00},B_{01})$}}]{\tiny{$TRS$}}
        edge from parent[ser-in]}
        child{node[label={below, xshift=3mm, yshift=8mm:\tiny{$\proc{MMS}(T_{10},X_{01},B_{11})$}}]{\tiny{$MMS$}}
        edge from parent[ser-out]}
    edge from parent[par-out]}
edge from parent[ser-in]}
child{node[label={above left, xshift=-3mm, yshift=-6mm:\tiny{$\proc{TRS}(T_{11},B_{10}) \| \proc{TRS}(T_{11},B_{11})$}}]{$\parallel$}
edge from parent[ser-out]};
\end{tikzpicture}
\caption{$2$-way divide-and-conquer \proc{TRS} algorithm}
\label{fig:TRS-algo}
\end{figure}
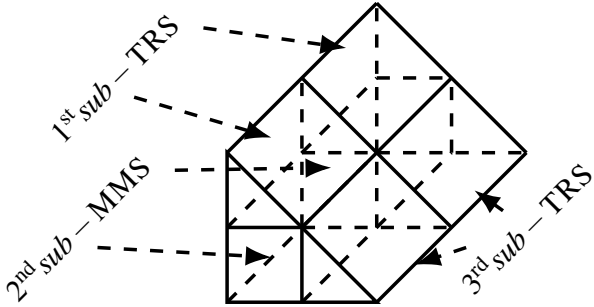
} 

\eqref{TRS-matrix} recursively solves \proc{TRS} on four equally sized sub-quadrants
$X_{00}$, $X_{01}$, $X_{10}$, and $X_{11}$, as graphically depicted in 
\figref{2-way-TRS-algo}. It can be expressed in the NP model as shown in
\eqref{TRS-NP}, where $\proc{MMS}(A, B, C)$ represents a cache-oblivious matrix 
multiplication and subtraction (identical to the one presented in~\secref{model}, except 
instead of computing $C += AB$ it computes $C -= AB$) with span $O(n)$ and using 
$O(n^2)$ space.\footnote{There is also an $8$-way divide-and-conquer cache-oblivious 
parallel algorithm of \proc{MMS} that has an optimal span of $O(\log^2 n)$ but uses 
$O(n^3)$  space which can be used to trade off span for space complexity.}

\begin{figure}[!ht]
\begin{align}
    X &\assign \proc{TRS}(T, B) = \nonumber \\
      (( & X_{00} \assign \proc{TRS}(T_{00}, B_{00}) \serial \proc{MMS}(T_{10}, X_{00}, B_{10})) \nonumber \\
      & \parallel (X_{01} \assign \proc{TRS}(T_{00}, B_{01}) \serial \proc{MMS}(T_{10}, X_{01}, B_{11}) )) \nonumber \\
     \serial& (X_{10} \assign \proc{TRS}(T_{11}, B_{10}) \parallel X_{11} \assign \proc{TRS}(T_{11}, B_{11}))
\label{eq:TRS-NP}
\end{align}
\end{figure}

The span of the TRS algorithm, expressed in the NP model is given by
the recurrence $T_{\infty, \proc{TRS}}(n) = 2 T_{\infty, \proc{TRS}}(n/2) 
+ T_{\infty, \proc{SMM}}(n/2)$, which evaluates to $O(n \log n)$, and is 
not optimal; a straightforward right-looking algorithm has a span of $O(n)$.

\begin{figure}[!h]
\includegraphics[width = \linewidth]{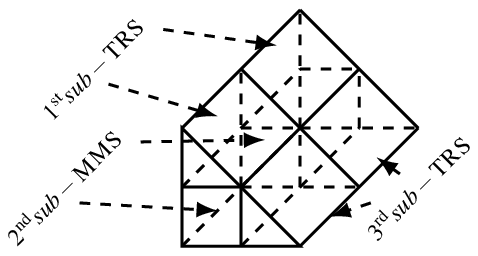}
 \caption{$2$-way divide-and-conquer \proc{TRS} algorithm}
\label{fig:2-way-TRS-algo}
\end{figure}


In \eqref{TRS-new-sched}, we replace the ``$\serial$'' constructs
from the original schedule with ``$\fire$'' construct, in order to
remove artificial dependencies. Because the two ``$\fire$'' constructs
join different types of tasks, they have distinct types, which we 
denote ``$\tfire{TM}$'' and ``$\tfire{2TM2T}$''. Note that
there are algorithms, e.g. Cholesky factorization, where two types
of subtasks, say $\proc{trs}$ and $\proc{mms}$, have more than one kind
of partial dependency pattern between them based on where they occur.
Each type of fire construct has different set of fire rules;
in order to determine what these rules are, we expand an additional
level of recursion to examine finer-grained data dependencies.
\begin{align}
    X &\assign \proc{TRS}(T, B) = \nonumber \\
      (( & X_{00} \assign \proc{TRS}(T_{00}, B_{00}) \tfire{TM} \proc{MMS}(T_{10}, X_{00}, B_{10})) \nonumber \\
      & \parallel (X_{01} \assign \proc{TRS}(T_{00}, B_{01}) \tfire{TM} \proc{MMS}(T_{10}, X_{01}, B_{11}) )) \nonumber \\
     \tfire{2TM2T}& (X_{10} \assign \proc{TRS}(T_{11}, B_{10}) \parallel X_{11} \assign \proc{TRS}(T_{11}, B_{11}))
\label{eq:TRS-new-sched}
\end{align}

Notice that the source task of $\tfire{2TM2T}$ is $((
X_{00} \assign \proc{TRS}(T_{00},
B_{00}) \tfire{TM} \proc{MMS}(T_{10}, X_{00}, B_{10})) \parallel
(X_{01} \assign \proc{TRS}(T_{00},
B_{01}) \tfire{TM} \proc{MMS}(T_{10}, X_{01}, B_{11}) ))$, and the its
sink is $(X_{10} \assign \proc{TRS}(T_{11}, B_{10}) \parallel
X_{11} \assign \proc{TRS}(T_{11}, B_{11}))$.  Since the left subtask
of the sink can start as soon as the matrix multiply updating $B_{10}$
(which is the right subtask of the left subtask of the sink) is
completed, and the right subtask of the sink analogously depends on
the matrix multiply updating $B_{11}$, the fire rule for
$\tfire{2T2M}$ is simply:
\begin{equation}
\oone{+} \tfire{2TM2T} \oone{-} =
 \{ \othree{+}{1}{2} \tfire{MT} \otwo{-}{1}, \othree{+}{2}{2} \tfire{MT} \otwo{-}{2} \}.
\label{eq:def2T2M}
\end{equation}
Both the fire constructs in the fire rules are of type
``$\tfire{MT}$'' since the dependency structure is identical: the
matrix updated in the source \proc{MMS} operation is used as a
dependency in the second argument of the \proc{TRS} operation.

In order to determine the set of fire rules for ``$\tfire{MT}$'', we
expand a pair of subtasks connected by the ``$\tfire{MT}$'' construct
to an additional level of recursion.  For instance, we will expand the
task $\proc{MMS}(T_{10}, X_{00}, B_{10})$ in
equation \eqref{TRS-new-sub-sched}, which (as the source) binds to
$\oone{+}$ in $\tfire{TM}$, and $X_{00} \assign \proc{TRS}(T_{11},
B_{10})$ in \eqref{MM-sub-sched}, which binds to $\oone{-}$. In the
following program, we use $A_{00,11}$ to denote the bottom right
quadrant of the top left quadrant of $A$.

\begin{align}
&\proc{MMS}(T_{10}, X_{00}, B_{10}) =
\quad \quad \quad \quad \quad \quad \quad \quad \quad \quad \quad \quad \quad \quad \quad \  // \oone{+} \nonumber \\
&((\proc{MMS}(T_{10,00},X_{00,00},B_{10,00})  \parallel
\proc{MMS}(T_{10,00},X_{00,01},B_{10,01}))
\nonumber \\
&\parallel ( \proc{MMS}(T_{10,10},X_{00,00},B_{10,10}) \parallel 
\proc{MMS}(T_{10,10},X_{00,01},B_{10,11})))
\nonumber \\
&\tfire{MM}
((\proc{MMS}(T_{10,01},X_{00,10},B_{10,00}) \parallel 
\proc{MMS}(T_{10,01},X_{00,11},B_{10,01}))
\nonumber \\
&\parallel ( \proc{MMS}(T_{10,11},X_{00,10},B_{10,10}) \parallel
\proc{MMS}(T_{10,11},X_{00,11},B_{10,11}))).
\label{eq:MM-sub-sched}
\end{align}

\noindent
\begin{flalign}
 X_{10} & \assign \proc{TRS}(T_{11}, B_{10}) =
\quad \quad \quad \quad \quad \quad \quad \quad \quad \quad \quad \quad // \oone{-} \nonumber \\
      (( & X_{00,00} \assign \proc{TRS}(T_{11,00}, B_{10,00}) \tfire{TM}
      \proc{MMS}(T_{11,10}, X_{00,00}, B_{10,10})) \nonumber \\
      & \parallel (X_{00,01} \assign \proc{TRS}(T_{11,00}, B_{10,01})\tfire{TM}
      \proc{MMS}(T_{11,10}, X_{00,01}, B_{10,11}) )) \nonumber \\
     \tfire{2TM2T}& (X_{00,10} \assign \proc{TRS}(T_{11,11}, B_{10,10}) \parallel
     X_{00,11} \assign \proc{TRS}(T_{11,11}, B_{10,11}))
\label{eq:TRS-new-sub-sched}
\end{flalign}

The dependence of the sink task, $\oone{-}$, on the source task, $\oone{+}$, 
in ``$\tfire{MT}$'' is a result of requiring the value of matrix $B_{10}$ to be
updated by $\oone{+}$ before $\oone{-}$ can use it in a computation. 
At a more fine-grained level, we can examine which quadrant of $B_{10}$ 
each subtask of $\oone{-}$ requires (and which subtask of $\oone{+}$ computes
that quadrant) in order to calculate the fine-grained dependencies. 
For instance, consider the subtask $X_{00,00} \assign \proc{TRS}(T_{11,00}, 
B_{10,00})$, whose pedigree is $\ofour{-}{1}{1}{1}$, which requires $B_{10,00}$.
This quadrant of $B_{10}$ is updated in $\proc{MMS}(T_{10,01},X_{00,10},B_{10,00})$ 
of the source, whose pedigree is $\ofour{+}{2}{1}{1}$. Furthermore, notice that
the dependency from $\ofour{+}{2}{1}{1}$ to $\ofour{-}{1}{1}{1}$ takes the
same form as the dependency from $\oone{+}$ to $\oone{-}$ itself: the matrix 
updated by the source is the second argument in the sink. Therefore, the fire 
rule for this particular dependency is 
$\ofour{+}{2}{1}{1} \tfire{TM} \ofour{-}{1}{1}{1}$.
Similar reasoning gives the remaining fire rules:
\begin{align}
    \oone{+} \tfire{TM} \oone{-} =& \{ 
    	\ofour{+}{1}{1}{1} \tfire{TM} \ofour{-}{1}{1}{1},
    	\ofour{+}{1}{1}{1} \tfire{TM} \ofour{-}{1}{2}{1}, \nonumber \\
     &\ \ofour{+}{1}{2}{1} \tfire{TM} \ofour{-}{1}{1}{2},
    	\ofour{+}{1}{2}{1} \tfire{TM} \ofour{-}{1}{2}{2}, \nonumber \\
     &\ \othree{+}{2}{1} \tfire{TM} \ofour{-}{2}{1}{1},
    	\othree{+}{2}{1} \tfire{TM} \ofour{-}{2}{2}{1}, \nonumber \\
     &\ \othree{+}{2}{2} \tfire{TM} \ofour{-}{2}{1}{2},
    	\othree{+}{2}{2} \tfire{TM} \ofour{-}{2}{2}{2} \}    
\label{eq:TM}
\end{align}
 
\begin{align*} 
	\oone{+} \tfire{2TM2T} \oone{-} =& \{
		\othree{+}{1}{2} \tfire{MT} \otwo{-}{1}, \othree{+}{2}{2} \tfire{MT} \otwo{-}{2} \}
    \\
    \oone{+} \tfire{MT} \oone{-} =& \{
       \ofour{+}{2}{1}{1} \tfire{MM} \ofour{-}{1}{1}{2}, 
       \ofour{+}{2}{1}{2} \tfire{MM} \ofour{-}{1}{2}{2}, \\
     & \ \ofour{+}{2}{2}{1} \tfire{MT} \ofour{-}{1}{1}{1}, 
       \ofour{+}{2}{2}{2} \tfire{MT} \ofour{-}{1}{2}{1} \}
\end{align*} 

\begin{wrapfigure}{R}{0.28\columnwidth}
\includegraphics[width = 0.28\columnwidth]{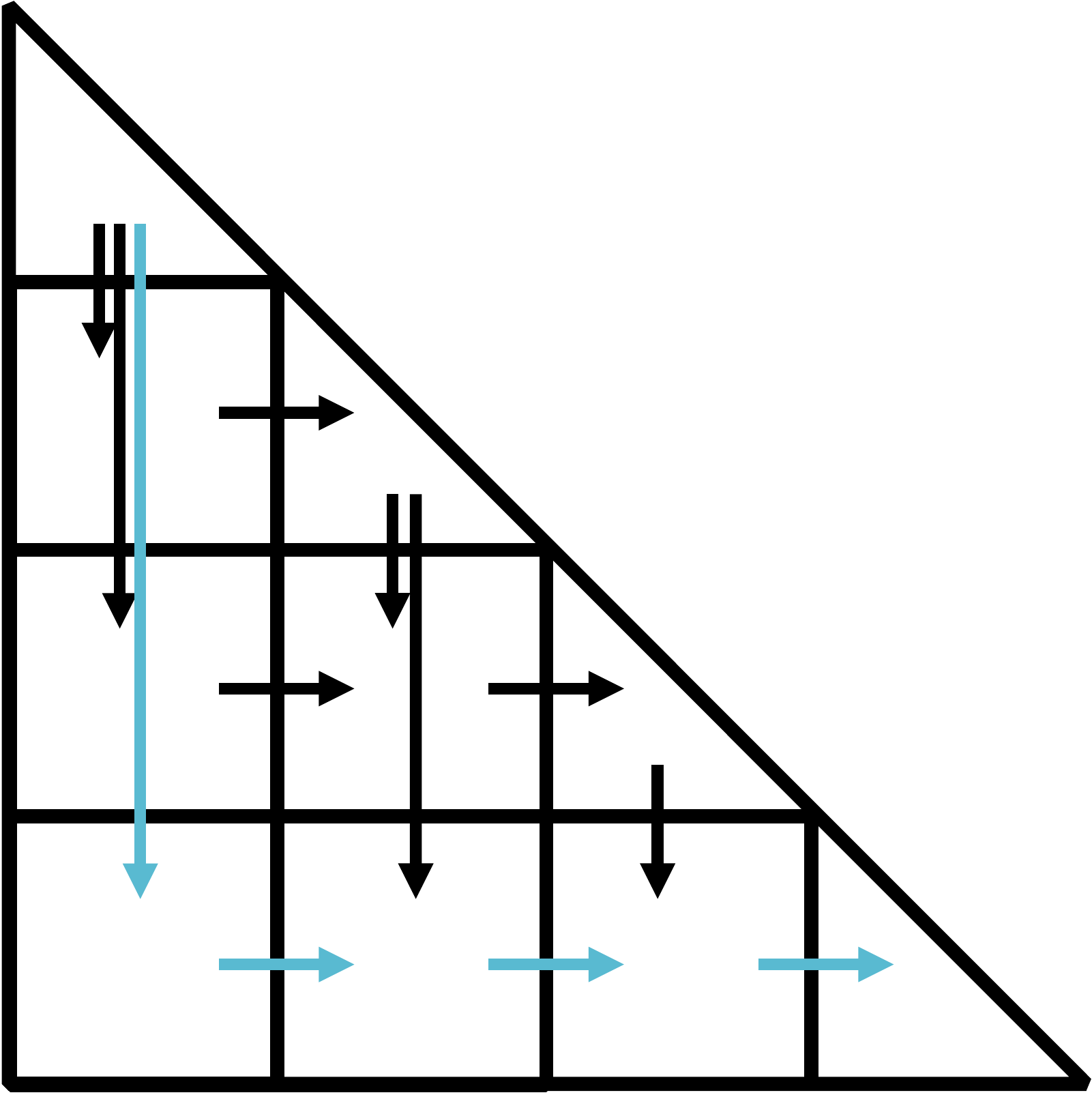}
  \caption{TRS DAG cross-section}
  \label{fig:trs-lattice-dag}
\end{wrapfigure}

We now argue that the span of TRS in the ND model is $O(n)$. The span
of an algorithm is the length of the longest path in its DAG. The
algorithm DAG defined by TRS expressed in the ND model forms a
periodic structure, a cross section of which can be seen in
\figref{trs-lattice-dag}, where squares represent matrix
 multiplications and triangles represent smaller TRS tasks (there are
no edges between separate cross sections).  The length of the longest
path in the DAG, shown in blue, is $O(n)$.

We now formally prove that the algorithm we constructed in the ND
model achieves this span. Let $T_{\infty,\proc{TRS}}(n)$ denote the
span of $\proc{TRS}$ on a matrix with input size $n\times n$, and let
$T_{\infty, \proc{TRS}\ p}(n)$ denote the span of the subtask with
pedigree $p$ descended from $\proc{TRS}$ with input size $n\times n$.
Furthermore, let $T_{\infty,\tfire{TM}}(n)$ denote the critical path
length of a TRS composed with a matrix multiply by a ``$\tfire{TM}$''
construct, where both tasks are directly descended from a TRS of size
$n\times n$. Note that it involves two tasks, a $\proc{TRS}$ and
$\proc{MMS}$, of size $n/2 \times n/2$ each.

Since replacing a fire construct with a serial construct can only
increase the span, it suffices to show that a version of the problem
with some fire constructs replaced with serial constructs has optimal
span. In this analysis, we will replace the ``$\tfire{2TM2T}$''
construct with a serial composition, giving the following upper bound
on the span of TRS in the ND model:
\begin{equation}
\label{eq:trs-depth}
T_{\infty, \proc{TRS}}(n) \le
T_{\infty, \proc{TRS}\oone{1}}(n) 
+ T_{\infty, \proc{TRS}\oone{2}}(n).
\end{equation}

Since the right subtask of \proc{TRS} is merely the parallel
composition of two \proc{TRS} operations, each on a matrix of size
$n/2\times n/2$, the second term on the right reduces to the max of
their (identical) spans, which is
\[
T_{\infty, \proc{TRS}\oone{2}}(n) = T_{\infty, \proc{TRS}}(n/2).
\]

The left subtask consists of two pairs (connected by a parallel
composition), each consisting of a \proc{TRS} task and a \proc{MMS}
task, connected by ``$\tfire{TM}$'' construct and done in parallel,
and their spans are identical.  Therefore, the first term on the right
hand side of inequality~\ref{eq:trs-depth} reduces to
\[
T_{\infty, \proc{TRS}\oone{1}}(n) =
T_{\infty, \proc{TRS}\othree{1}{1}{1} \tfire{TM} \proc{TRS}\othree{1}{1}{2}}(n) =
T_{\infty, \tfire{TM}}(n).
\]

The term on the right is the maximum length among all possible paths
rewritten from
$\othree{1}{1}{1} \tfire{\proc{T}\proc{M}} \othree{1}{1}{2}$.  There
are two types of paths are could potentially be the longest. An
instance of the first type is the ``$\tfire{TM}$'' composition of
tasks TRS\osix{1}{1}{1}{1}{1}{1}, a TRS of size $n/4$, with
TRS\osix{1}{1}{2}{1}{1}{1}, a MMS of size $n/4$, followed by a MMS of
size $n/4$. This gives the first expression in the max term in the
equation below. An instance of the second type is the $\tfire{TM}$
composition of the task $\proc{trs}\osix{1}{1}{1}{1}{1}{1}$, a TRS of
size $n/4$, with $\proc{trs}\osix{1}{1}{1}{1}{1}{2}$, a MMS of size
$n/4$, followed by the ``$\tfire{TM}$'' composition of
$\proc{trs}\ofive{1}{1}{1}{2}{1}$, a TRS of size $n/4$, with
$\proc{trs}\osix{1}{1}{2}{2}{1}{1}$, a MMS of size $n/4$. This results
in the second expression in the max term below.
\[
T_{\infty, \tfire{\proc{T}\proc{M}}}(n) \le
\max\{T_{\infty, \tfire{\proc{T}\proc{M}}}(n/2) + T_{\infty, \proc{MMS}}(n/4), \hspace{2pt} 
2 T_{\infty, \tfire{\proc{T}\proc{M}}}(n/2)\}\nonumber
\]

For the base case of the recurrence, we simply run \proc{TRS} and
\proc{MM} sequentially at the base case size. Therefore, we have
\[T_{\infty, \tfire{\proc{T}\proc{M}}}(1) =
T_{\infty, \proc{TRS}}(1) + T_{\infty, \proc{MMS}}(1) = O(1).\]

Noting that $T_{\infty,\proc{MMS}}(n) = O(n)$, the recurrences can be
solved to show that $T_{\infty, \proc{TRS}}(n) = O(n)$, which is
asymptotically optimal.

\punt{
---------------------------
Since the span of a ``$\fire$'' operator is calculated as a
``$\max$'' operation over all possible paths from source to
sink, \eqref{TRS-fire} specifies how a path of
``$\tfire{\proc{T}\proc{M}}$'' at
recursion level-$i$ can be rewritten to a set of
``$\tfire{\proc{T}\proc{M}}$'' paths at recursion
level-$(i+1)$ with respect to ``$\parallel$'' and ``$\serial$''
primitives in \eqreftwo{TRS-new-sub-sched}{MM-sub-sched},
respectively.
we have a new span recurrence of ND \proc{TRS} algorithm
shown in \eqref{TRS-new-span-rec},
following the paths illustrated in \eqref{TRS-fire-paths}.
The paths in \eqref{TRS-fire-paths} follows either
``$\fire$'' operators or ``$\serial$'' primitives in equation
or explicit arrows rewritten by fire rules in \eqref{TRS-fire}.
\begin{figure}[!ht]
\begin{align}
    \Comment \quad T_{\infty, \proc{TRS}} &= T_{\infty, \oone{1} \tfire{\proc{T}\proc{M}} \oone{2}} + T_{\infty, \oone{5}} \nonumber \\
    T_{\infty, \proc{TRS}}(n) &= T_{\infty, \tfire{\proc{T}\proc{M}}}(n/2) + T_{\infty, \proc{TRS}}(n/2) \nonumber \\
    \Comment T_{\infty, \oone{1} \tfire{\proc{T}\proc{M}} \oone{2}} &= \max\{T_{\infty, \otwo{1}{1} \tfire{\proc{T}\proc{M}} \otwo{2}{1}} + T_{\infty, \otwo{2}{5}} \nonumber\\
    \Comment \quad \quad \quad \quad & \quad \quad \quad, T_{\infty, \otwo{1}{1} \tfire{\proc{T}\proc{M}} \otwo{1}{2}} + T_{\infty, \otwo{1}{5} \tfire{\proc{T}\proc{M}} \otwo{2}{5}}\} \nonumber \\
    T_{\infty, \tfire{\proc{T}\proc{M}}}(n) &= \max\{T_{\infty, \tfire{\proc{T}\proc{M}}}(n/2) + T_{\infty, \proc{MMS}}(n/2), \hspace{2pt} 2 T_{\infty, \tfire{\proc{T}\proc{M}}}(n/2)\}\nonumber \\
    T_{\infty, \tfire{\proc{T}\proc{M}}}(1) &= T_{\infty, \proc{TRS}}(1) + T_{\infty, \proc{MMS}}(1) \label{eq:TRS-new-span-rec}
\end{align}
\end{figure}
The first recurrence of \eqref{TRS-new-span-rec} comes directly
from the top level of recursion in \eqref{TRS-new-sched}. It
should be rigorously calculated as
$\max\{\oone{1} \tfire{\proc{T}\proc{M}} \oone{2},
\oone{3} \tfire{\proc{T}\proc{M}} \oone{4}\} +
\max\{\oone{5}, \oone{6}\}$. Since we only parallelize the
same function computed on the same sizes of inputs, we
omit the ``$\max$''
operation for ``$\parallel$'' primitive here to save space.
The second recurrence takes a ``$\max$'' operation over all
possible paths rewritten by \eqref{TRS-fire} from
$\oone{1} \tfire{\proc{T}\proc{M}} \oone{2}$ with respect to
``$\parallel$'' and ``$\serial$'' primitives.
(refer to \eqref{TRS-fire-paths} for all possible paths from
$\oone{1}$ to $\oone{2}$)
The last equation in \eqref{TRS-new-span-rec} says that when both
operand subtrees are leaves, the ``$\tfire{\proc{T}\proc{M}}$''
operator reduces to ``$\serial$'' primitive. This span recurrence
solves to $O(n)$, recalling that
$T_{\infty, \proc{MMS}}(n) = O(n)$, which is asymptotically
optimal.


\figreftwo{TRS-old-tree}{TRS-new-tree} are the spawn trees of
\proc{TRS} in nested parallel model and ND model, respectively.
Compared \figref{TRS-old-tree} with \figref{TRS-new-tree}, we can
see that the divide-and-conquer tree of original algorithm is
not changed, the new ND algorithm just replaces some original
solid arrows of ``$\serial$'' primitive by dashed
arrows of ``$\fire$'' operator to hint the runtime scheduler
that some subtrees can possibly be scheduled earlier as soon
as their data dependencies are satisfied.
Since the structure of spawn tree is preserved, we argue that
if the runtime system schedule any local computation
in a depth-first order with respect to the spawn tree,
the cache locality should not suffer.

} 

\begin{figure}[h]
\includegraphics[width = 0.7 \linewidth]{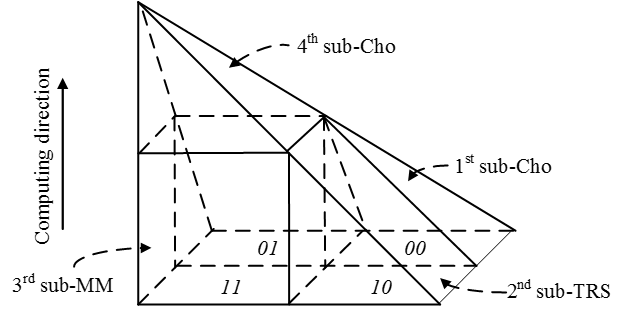}
\caption{$2$-way divide-and-conquer Cholesky algorithm}
\label{fig:cho-geo-algo}
\end{figure}

\para{Cholesky Decomposition.}
Given an $n$(row)-by-$n$(column) Hermitian, positive-definite
matrix $A$, the Cholesky decomposition asks for an $n$-by-$n$
lower triangular matrix $L$ such that $A = L L^T$. We denote 
the algorithm as $L \assign \proc{Cho}(A)$. This problem can
be recursively solved by a $2$-way divide-and-conquer algorithm,
geometrically described in \figref{cho-geo-algo}, as follows:
\begin{align*}
    \begin{bmatrix}
        A_{00} & A_{10}^T \\
        A_{10} & A_{11}
    \end{bmatrix}
    &=
    \begin{bmatrix}
        L_{00} & 0 \\
        L_{10} & L_{11}
    \end{bmatrix}
    \begin{bmatrix}
        L_{00}^T & L_{10}^T \\
        0        & L_{11}^Tc
    \end{bmatrix} \nonumber \\
    &=
    \begin{bmatrix}
        L_{00} L_{00}^T & L_{00} L_{10}^T \\
        L_{10} L_{00}^T & L_{10} L_{10}^T + L_{11} L_{11}^T
    \end{bmatrix}.
\end{align*}

Cholesky factorization can be expressed in the fork-join model as shown in
\eqref{Cho-old-sched} with a span recurrence of $T_{\infty, \proc{Cho}}(n) = 
2 T_{\infty, \proc{Cho}}(n/2) + T_{\infty, \proc{TRS}}(n/2) + T_{\infty, \proc{MM}}(n/2)$.
Assuming spans of $O(n \log n)$ and $O(n)$  for  \proc{TRS} and \proc{MM} respectively,
this recurrence results in a span bound of $O(n \log^2 n)$ for the $2$-way 
divide-and-conquer Cholesky algorithm.
\begin{align}
L &\assign \proc{Cho}(A) = \nonumber \\
  & ( L_{00} \assign \proc{Cho}(A_{00})
    \serial L_{10} \assign \proc{TRS}(L_{00}, A_{10}^{T})^{T} )\nonumber \\
    \serial& ( \proc{MMS}(L_{10}, L_{10}^T, A_{11}) \serial L_{11} \assign \proc{Cho}(A_{11}))
\label{eq:Cho-old-sched}
\end{align}

We can express Cholesky in the ND model as follows:
\begin{align}
L &\assign \proc{Cho}(A) = \nonumber \\
  & ( L_{00} \assign \proc{Cho}(A_{00})
    \tfire{CT} L_{10} \assign \proc{TRS}(L_{00}, A_{10}^{T})^{T} ) \nonumber \\
    \tfire{CTMC} & ( \proc{MMS}(L_{10}, L_{10}^T, A_{11})
    \tfire{MC} L_{11} \assign \proc{Cho}(\tilde{A}_{11})).
\label{eq:Cho-new-sched}
\end{align}

The set of fire rules are defined as follows 
(note that ``$\tfire{TM}$'' is the same as in \eqref{TM}):

\begin{align*}
    \oone{+} \tfire{CT} \oone{-} = \{ &
       \othree{+}{1}{1} \tfire{CT} \ofour{-}{1}{1}{1}, 
       \othree{+}{1}{1} \tfire{CT} \ofour{-}{1}{2}{1}, \\
     & \othree{+}{1}{2} \tfire{TM2} \ofour{-}{1}{2}{1}, 
       \othree{+}{1}{2} \tfire{TM2} \ofour{-}{1}{2}{2}, \\
     & \othree{+}{2}{2} \tfire{CT} \othree{-}{2}{1}, 
       \othree{+}{2}{2} \tfire{CT} \othree{-}{2}{2} 
     \} \\
    \oone{+} \tfire{CTMC} \oone{-} = \{ & \otwo{+}{2} \tfire{TM2} \otwo{-}{1} \} \\
	\oone{+} \tfire{TM2} \oone{-} = \{ &
		\oone{+} \tfire{TM} \oone{-},
		\oone{+} \tfire{TM1} \oone{-}
	\}	\\
	\oone{+} \tfire{TM1} \oone{-} = \{ &
		  \ofour{+}{1}{1}{1} \tfire{TM1} \ofour{-}{1}{1}{1},  
		  \ofour{+}{1}{1}{1} \tfire{TM1} \ofour{-}{1}{1}{2}, \\ 
		& \ofour{+}{1}{2}{1} \tfire{TM1} \ofour{-}{1}{1}{1},  
		  \ofour{+}{1}{2}{1} \tfire{TM1} \ofour{-}{1}{1}{2}, \\ 
		& \othree{+}{2}{1} \tfire{TM1} \ofour{-}{2}{1}{1},  
		  \othree{+}{2}{1} \tfire{TM1} \ofour{-}{2}{1}{2}, \\ 
		& \othree{+}{2}{2} \tfire{TM1} \ofour{-}{2}{2}{1},  
		  \othree{+}{2}{2} \tfire{TM1} \ofour{-}{2}{2}{1}  
	\}	\\
    \oone{+} \tfire{MC} \oone{-} = \{ &
       \ofour{+}{2}{1}{1} \tfire{MC} \othree{-}{1}{1}, 
        \ofour{+}{2}{2}{1} \tfire{MT} \othree{-}{1}{2}, \\
       & \ofour{+}{2}{2}{2} \tfire{MC} \othree{-}{2}{2} 
     \}
\end{align*}
The span recurrence for Cholesky is:
\begin{align*}
& T_{\infty, \proc{Cho}}(n) 
&\le T_{\infty, \proc{Cho}\otwo{1}{1} \tfire{CT} \otwo{1}{2} \tfire{TM} \otwo{2}{1}}(n) 
+ T_{\infty, \proc{Cho}\otwo{2}{2}}(n) \nonumber \\
&& = T_{\infty, \proc{Cho}\otwo{1}{1} \tfire{CT} \otwo{1}{2} \tfire{TM} \otwo{2}{1}}(n) 
+ T_{\infty, \proc{Cho}}(n/2) \nonumber \\
\end{align*}
The first term on the right hand side can be bounded recursively by
\begin{equation}
 T_{\infty, \proc{Cho}\otwo{1}{1} \tfire{CT} \otwo{1}{2} \tfire{TM} \otwo{2}{1}}(n) 
\le 2 T_{\infty, \proc{Cho}\otwo{1}{1} \tfire{CT} \otwo{1}{2} \tfire{TM} \otwo{2}{1}}(n/2).
\label{eq:Cho-new-span-rec}
\end{equation}
For the base case, we have 
\[
 T_{\infty, \proc{Cho} \tfire{CT} \proc{TRS} \tfire{TM} \proc{MM}} (1)
= T_{\infty, \proc{Cho}}(1) + T_{\infty, \proc{TRS}}(1) + T_{\infty, \proc{MM}}(1)
\]
\eqref{Cho-new-span-rec} solves to $O(n)$ and is asymptotically optimal.

\vspace{1ex}
\para{LU with Partial Pivoting.}
A straightforward parallelization of the $2$-way divide-and-conquer
algorithm by Toledo \cite{Toledo97}, combined with a replacement of
the \proc{TRS} algorithm by our new ND \proc{TRS}, yields an optimal
LU with partial pivoting algorithm for an $n\times m$ matrix with time
(span) bound $O(m \log n)$, and serial cache bound $O(nm^2/B\sqrt{M} +
n m + (n m \log m)/B)$ in the ideal cache model~\cite{FrigoLePr99}
with a cache of size $M$ and block size $B$.

\vspace{1ex}
\para{Floyd-Warshall Algorithm.}
\begin{figure}[!ht]
\hspace{-.25in}
\includegraphics[width=0.45\linewidth]{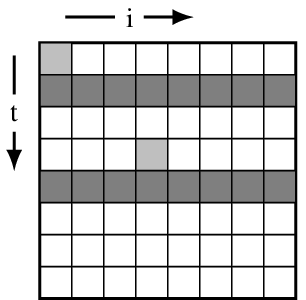}
\vspace{5pt}
\caption{1D FW dependency pattern}
\label{fig:floyd-1D-def}
\end{figure}

The fire construct can also be used to express dynamic programming
algorithms.  We will demonstrate this with 1-dimensional
Floyd-Warshall, a simple synthetic benchmark originally introduced
in \cite{TangYoKa15} \punt{to illustrate the ``eager'' COW
(Cache-Oblivious Wavefront) technique.}. Its data dependency pattern
is similar to that of the Floyd-Warshall algorithm for All-Pairs
Shortest Paths.  The defining recurrence of $1$D FW is is as follows
for $1 \leq i, t \leq n$ (we assume that $d(0, i)$ are already known
for $1 \leq i \leq n$):

\vspace{-6pt}
\begin{align}
    d(t, i) &= d(t - 1, i) \oplus d(t - 1, t - 1).
\label{eq:floyd-1D-def}
\end{align}

\figref{floyd-1D-def} shows the data dependency pattern of $1$D FW. In the figure,
dark-shaded cells are those updated in the current timestep, and the
light-shaded cells denote the diagonal cells from previous time step
used to calculate the current row. The value of cell $i$ at timestep
$t$, $d(t, i)$ depends on the value of the cell at the previous
timestep, $d(t - 1, i)$, and the value of the diagonal cell from the
previous timestep, $d(t-1, t-1)$.

We adapt the divide-and-conquer algorithm from \cite{ChowdhuryRa10} to
recursively solve the problem as follows: given a dynamic programming
table $X$, we apply the following algorithm, $A$, to $X$:
\begin{align}
X \assign \proc{A}(X) =
  & ((X_{00} \assign \proc{A}(X_{00})) \tfire{AB}
  (X_{01} \assign \proc{B}(X_{01}, X_{00}))) \nonumber \\
\tfire{ABAB} & \ ((X_{11} \assign \proc{A}(X_{11})) \tfire{AB} 
  (X_{10} \assign \proc{B}(X_{10}, X_{11}))) \nonumber \\
X \assign \proc{B}(X, Y) =
  & ((X_{00} \assign \proc{B}(X_{00}, Y_{00})) \parallel 
  (X_{01} \assign \proc{B}(X_{01}, Y_{00}))) \nonumber \\
\tfire{BBBB} & \ ((X_{10} \assign \proc{B}(X_{10}, Y_{11})) \parallel
  (X_{11} \assign \proc{B}(X_{11}, Y_{11})))
\label{eq:fw-1D-new-sched}
\end{align}
In \eqref{fw-1D-new-sched}, $X \assign \proc{A}(X)$ denotes a task on 
data block $X$ that contains all the diagonal entries needed for the task,
and $X \assign \proc{B}(X, Y)$  denotes the task on data block $X$ where the
diagonal entries needed for the task are contained in $Y$.

The set of fire rules is as follows:
\begin{flalign*}
    & \begin{aligned}[t]\oone{+} \tfire{AB} \oone{-} = \{
     & \othree{+}{1}{1} \tfire{AB} \othree{-}{1}{1},
     \othree{+}{1}{1} \tfire{AB} \othree{-}{1}{2}, \\
     & \othree{+}{2}{1} \tfire{AB} \othree{-}{2}{1},
     \othree{+}{2}{1} \tfire{AB} \othree{-}{2}{2}   \}
    \end{aligned}\\
    & \oone{+} \tfire{ABAB} \oone{-} =
    \{ \otwo{+}{2} \tfire{BA} \otwo{-}{1} \}\\ 
    & \oone{+} \tfire{BA} \oone{-} = 
    \{ \othree{+}{2}{1} \tfire{BA} \othree{-}{1}{1},
    \othree{+}{2}{2} \tfire{BB} \othree{-}{1}{2} \}\\
    & \oone{+} \tfire{BBBB} \oone{-} =
    \{ \otwo{+}{1} \tfire{BB} \otwo{-}{1},
    \otwo{+}{2} \tfire{BB} \otwo{-}{2} \}\\
    & \oone{+} \tfire{BB} \oone{-} = \{ \othree{+}{2}{1} \tfire{BB} \othree{-}{1}{1}, \othree{+}{2}{2} \tfire{BB} \otwo{1}{2} \}
\end{flalign*}

\punt{
Of course, the computation of $X_{11} \assign
\proc{A}(X_{11})$ depends not only on the diagonal cells contained in $X_{11}$
but also on the last row of $X_{01}$, which is computed by $\proc{B}\oone{2}$.
However, since this data dependency is automatically satisfied by the ``$\serial$``
primitive in the schedule, we omit $X_{01}$ as an input parameter to $\proc{A}\oone{3}$
for simplicity. Similarly we omit the first row of $X$, i.e. $d(0, i)$ for
$1 \leq i \leq n$, as the input to $\proc{A}\oone{1}$, and so on.
}

If ``$\tfire{ABAB}$'' and ``$\tfire{BBBB}$'' are regarded as
``$\serial$'' (which only increases the span), the recurrence for span
in the ND model is:
 \begin{equation}
 	T_{\infty, \proc{A}}(n)  \le 2 T_{\infty, \tfire{AB}}(n/2), \quad \quad
 	T_{\infty, \tfire{AB}}(n) \le 2 T_{\infty, \tfire{AB}}(n/2).
 \label{eq:fw-1D-new-span-rec}
 \end{equation}
With the base case $T_{\infty, \tfire{AB}}(1) =
O(1)$, \eqref{fw-1D-new-span-rec} solves to the optimal $O(n)$, as
opposed to $O(n \log n)$ in the NP model.

Expressing the original $2$D ($2$ spatial dimensions plus $1$ time
dimension) Floyd-Warshall all-pairs-shorest-paths \cite{Warshall62,
Floyd62} using the ``$\fire$'' construct is a straightforward
extension of the design demonstrated here.

\punt{\subsecput{dp}{Dynamic Programming}
\para{Floyd-Warshall style Dynamic Programming}
We start from the simple synthetic $1$D ($1$D space) FW. 
It bears similarity to original $2$D FW \cite{Floyd62} in the 
data dependency pattern.  The defining recurrence of $1$D FW 
for $1 \leq i, t \leq n$ is as follows which assumes that 
$d(0, i)$ for $1 \leq i \leq n$ are already known.
\begin{align}
    d(t, i) &= d(t - 1, i) \oplus d(t - 1, t - 1)
\label{eq:floyd-1D-def}
\end{align}

We can see that the update of any cell $d(t, i)$ in $1$D FW depends on
both the diagonal cell from previous time step, i.e. $d(t-1, t-1)$,
and the cell of the same space position from previous time step,
i.e. $d(t-1, i)$.

We adapted a $2$ way divide-and-conquer algorithm from
\cite{ChowdhuryRa10} to recursively solve the problem as follows. 
Since the entire computation space is $2$D, i.e. one space dimension
plus one time dimension, at each recursion level we divide the
computation space $X$ into four equally sized sub-quadrants $X_{00}$,
$X_{01}$, $X_{10}$, and $X_{11}$, and apply schedule
in \eqref{fw-1D-new-sched} to the divide-and-conquer tree.
\noindent
\begin{figure}[!ht]
\begin{align}
X &\assign \proc{A}(X) = \nonumber \\
  & \Comment (\proc{A}\oone{1} \fire \proc{B}\oone{2}) \nonumber \\
  & (X_{00} \assign \proc{A}(X_{00}) \fire X_{01} \assign \proc{B}(X_{01}, X_{00})) \nonumber \\
  & \Comment \serial (\proc{A}\oone{3} \fire \proc{B}\oone{4}) \nonumber \\
  & \serial (X_{11} \assign \proc{A}(X_{11}) \fire X_{10} \assign \proc{B}(X_{10}, X_{11})) \nonumber \\
X &\assign \proc{B}(X, Y) = \nonumber \\
  & \Comment (\proc{B}\oone{1} \parallel \proc{B}\oone{2}) \nonumber \\
  & (X_{00} \assign \proc{B}(X_{00}, Y_{00}) \parallel X_{01} \assign \proc{B}(X_{01}, Y_{00})) \nonumber \\
  & \Comment \serial (\proc{B}\oone{3} \parallel \proc{B}\oone{4}) \nonumber \\
  & (X_{10} \assign \proc{B}(X_{10}, Y_{11}) \parallel X_{11} \assign \proc{B}(X_{11}, Y_{11}))
\label{eq:fw-1D-new-sched}
\end{align}
\end{figure}
In \eqref{fw-1D-new-sched}, $X \assign \proc{A}(X)$ denotes a computation on
data block $X$ with all dependent diagonal cells self-contained. 
\punt{
Of course, the computation of $X_{11} \assign 
\proc{A}(X_{11})$ depends not only on the diagonal cells contained in $X_{11}$
but also on the last row of $X_{01}$, which is computed by $\proc{B}\oone{2}$.
However, since this data dependency is automatically satisfied by the ``$\serial$``
primitive in the schedule, we omit $X_{01}$ as an input parameter to $\proc{A}\oone{3}$
for simplicity. Similarly we omit the first row of $X$, i.e. $d(0, i)$ for 
$1 \leq i \leq n$, as the input to $\proc{A}\oone{1}$, and so on.
} 
$X \assign \proc{B}(X, Y)$ denotes a computation on data block $X$
with all dependent diagonal cells included in a completely disjoint
data block $Y$.  In \eqref{fw-1D-new-sched}, there is only one type of
``$\proc{A} \fire \proc{B}$'' that needs to be refined. In order to
know how to expand the ``$\fire$'' recursion, we expand the recursion of
\proc{A}\oone{x} and \proc{B}\oone{y} respectively for one more level and
match the input and output of sub-function calls with each. We have the
``$\fire$'' rule in \eqref{fw-1D-fire}.
\noindent
\begin{figure}[!ht]
\begin{align}
\proc{A}\oone{x} &\fire \proc{B}\oone{y} = \nonumber \\
	& (\proc{A}\otwo{x}{1} \fire \{\proc{B}\otwo{x}{2}, \proc{B}\otwo{y}{1}, \proc{B}\otwo{y}{2}\}) \nonumber \\
	& \serial (\proc{A}\otwo{x}{3} \fire \{\proc{B}\otwo{x}{4}, \proc{B}\otwo{y}{3}, \proc{B}\otwo{y}{4}\})
\label{eq:fw-1D-fire}
\end{align}
\end{figure}
A corresponding span recurrence is in \eqref{fw-1D-new-span-rec}, and
solves to $O(n)$, which is asymptotically optimal.
\noindent
\begin{figure}[!ht]
\begin{align}
	T_{\infty, \proc{A}}(n) &= 2 T_{\infty, \proc{A} \fire \proc{B}}(n/2) \nonumber \\
	T_{\infty, \proc{A} \fire \proc{B}}(n) &= 2 T_{\infty, \proc{A} \fire \proc{B}}(n/2) \nonumber \\
	T_{\infty, \proc{A} \fire \proc{B}}(1) &= T_{\infty, \proc{A}}(1) + T_{\infty, \proc{B}}(1) = O(1) 
\label{eq:fw-1D-new-span-rec}
\end{align}
\end{figure}

Applying ``$\fire$'' operator to original $2$D ($2$ spatial dimensions
plus $1$ time dimension) Floyd-Warshall All-Pairs-Shorest-Paths
\cite{Warshall62, Floyd62} follows exactly the same procedure, just more
complicated.  So we leave it to interested readers.}

\begin{figure*}[!ht]
\begin{minipage}[b]{0.24 \linewidth}
\subfloat[Dashed arrows defined by the algorithm in \eqref{lcs-ND} and
rerwiting rules \eqreftwo{lcs-hv-rule}{lcs-vh-rule}.]{
\label{fig:lcs-dag-init}
\includegraphics[width=0.95\linewidth]{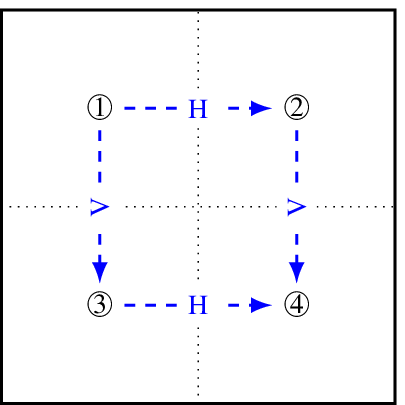}
} 
\end{minipage}
\hfill
\begin{minipage}[b]{0.24 \linewidth}
\subfloat[Dashed arrows are rewritten by rules in \eqreftwo{lcs-h-rule}{lcs-v-rule}.]{
\label{fig:lcs-dag-rewrite}
\includegraphics[width=0.95\linewidth]{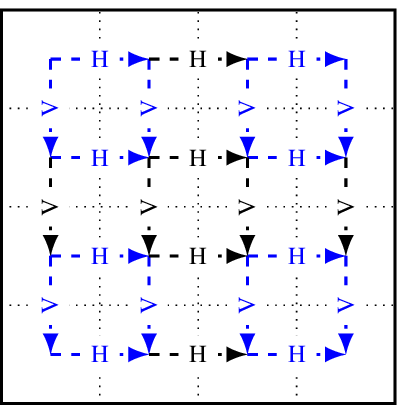}
} 
\end{minipage}
\hfill
\begin{minipage}[b]{.5\linewidth}
\subfloat[Spawn tree of LCS in ND model. We only draw one
 ``$\fire$'' path in \figref{lcs-dag-rewrite} from top-left
 to bottom-right cell]{
\label{fig:lcs-new-tree}
\includegraphics[width=0.95\linewidth]{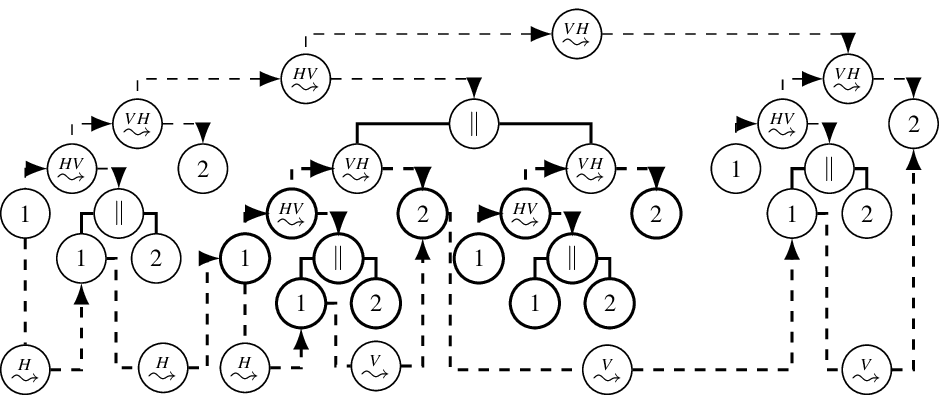}
} 
\end{minipage}
\caption{DAG Rewriting and spawn tree of LCS in ND model}
\label{fig:lcs-nd-model}
\vspace{-15pt}
\end{figure*}

\vspace{2ex}
\para{LCS (Longest Common Subsequence).}
In this section, we express a divide-and-conquer algorithm for LCS in
ND model.  Given two sequences $S = \langle \allowbreak
s_1, \allowbreak s_2, \allowbreak \ldots, \allowbreak s_m \rangle$ and
$T = \langle \allowbreak t_1, \allowbreak
t_2, \allowbreak \ldots, \allowbreak t_n \rangle$, the goal is to find
the length of longest common subsequence of $S$ and $T$. LCS can be
computed using \eqref{LCS-def} \cite{CormenLeRi09}.
\footnote{A similar recurrence applies to the pairwise sequence
 alignment with affine gap cost \cite{Gotoh82}.}
\noindent
\begin{align}
    X(i, j) &= 
        \cases{ 
        0 &  \mbox{if } i = 0 ~\vee~ j = 0\\
        X(i-1, j-1) + 1 &  \mbox{if } i, j > 0 ~\wedge~ s_i = t_j\\
        \max\{X(i, j-1), X(i-1, j)\} &  \mbox{if } i, j > 0 ~\wedge~ s_i \neq t_j
        } 
\label{eq:LCS-def}
\end{align}
In the ND model, we express the divide and conquer algorithm for LCS
that solves the above recursion for two sequences of the same length
($n$) as follows (see~\figref{lcs-new-tree}):
\noindent
\begin{flalign}
X \assign \proc{LCS}(X) = &
   ( (X_{00} \assign \proc{LCS}(X_{00}))  \tfire{\proc{HV}}  \nonumber \\
 & \quad (X_{01} \assign \proc{LCS}(X_{01}) 
  \parallel X_{10} \assign \proc{LCS}(X_{10})))  \nonumber \\
   \tfire{\proc{VH}}
   & \  (X_{11} \assign \proc{LCS}(X_{11}) )
\label{eq:lcs-ND}
\end{flalign}

The partial dependencies are given by the following fire rules which
are illustrated in~\figreftwo{lcs-dag-init}{lcs-dag-rewrite}:
\noindent
\begin{equation}
\oone{+} \tfire{\proc{HV}} \oone{-} = \{\oone{+} \tfire{\proc{H}} \otwo{-}{1},
\oone{+} \tfire{\proc{V}} \otwo{-}{2}\} 
\label{eq:lcs-hv-rule}
\end{equation}
\begin{equation}
\oone{+} \tfire{\proc{VH}} \oone{-} = \{\otwo{+}{1} \tfire{\proc{V}} \oone{-},
\otwo{+}{2} \tfire{\proc{H}} \oone{-}\} 
\label{eq:lcs-vh-rule}
\end{equation}
\begin{equation}
\oone{+} \tfire{\proc{H}} \oone{-} = \{\ofour{+}{1}{2}{1} \tfire{\proc{H}} \othree{-}{1}{1},
\otwo{+}{2} \tfire{\proc{H}} \ofour{-}{1}{2}{2}\}  
\label{eq:lcs-h-rule}
\end{equation}
\begin{equation}
\oone{+} \tfire{\proc{V}} \oone{-} = \{\ofour{+}{1}{2}{2} \tfire{\proc{V}} \othree{-}{1}{1}, 
\otwo{+}{2} \tfire{\proc{V}} \ofour{-}{1}{2}{1}\}  
\label{eq:lcs-v-rule}
\end{equation}

To compute the span of LCS, consider the dynamic programming
table. The span is defined by the length of longest path in the DAG
which runs from the top left entry to the bottom right entry. We will
separately compute the length of the longest horizontal path,
$T_{h}(n)$, and the length of the longest vertical path,
$T_{v}(n)$. Notice that the span, $T_{\infty, LCS}(n)$, is bounded
above by $T_{h}(n) + T_{v}(n)$.

Since we split an LCS problem whose dynamic programming table is of
size $n\times n$ into four LCS problems of size $n/2\times n/2$ of
which the longest horizontal path covers two, we have
$T_{h}(n)=2T_{h}(n/2)$. The base case (a $1\times 1$ matrix) only
depends on three inputs, so that $T_{h}(1)=O(1)$.  Therefore,
$T_{h}(n)=O(n)$. Similar reasoning shows that $T_{v}(n)=O(n)$.  As a
result, $T_{\infty, LCS}(n)$ is bounded above by $O(n)$, which is
optimal.

\secput{sb-sched}{Space-Bounded Schedulers for the ND Model}

We show that {\it reasonably regular} programs in the ND model,
including all the algorithms in~\secref{num}, can be effectively
mapped to Parallel Memory Hierarchies by adapting the design of
space-bounded (SB) schedulers for NP programs. Regularity is a
quantifiable property of the algorithm (or spawn tree) that measures
how difficult it is to schedule; we will quantify this and argue show
that the algorithm in~\secref{num} are highly regular.  Space-bounded
schedulers for programs in the NP model were first proposed for
completely regular programs \cite{ChowdhurySiBl13}, improved upon and
rigorously analyzed in \cite{BlellochFiGi11}, and empirically
demonstrated to outperform work-stealing based schedulers for many
algorithms in \cite{SimhadriBlFi14}, but not for TRSM and Cholesky
algorithms due to their limited parallelism in the NP
model~\cite{SimhadriBlFi15}. The key idea in SB schedulers is that
each task is annotated with the size of its memory footprint to guide
the mapping of tasks to processors and caches in the hierarchy.  The
main result of this section is~\thmref{rt}, which says that the SB
scheduler is able to exploit the extra parallelism exposed in the ND
model.

\para{Machine Model: Parallel Memory Hierarchy.}
SB schedulers are well suited for the Parallel Memory Hierarchy (PMH)
machine model \cite{ACF93} (see~\figref{PMH}), which models the
multi-level cache hierarchies and cache sharing common in shared
memory multi-core architectures.  The PMH is represented by a
symmetric tree rooted at a main memory of infinite size. The internal
nodes are caches and the leaves are processors. We refer to subtrees
rooted at some cache as \defn{subclusters}. Each cache at level $i$ is
assumed to be of the same size $M_i$, and has the same the number of
level-$(i-1)$ caches attached to it. We call this the \defn{fan-out}
of level-$i$ and denote it by the constant $f_i$, so that the number
of processors in a $h$-level tree if $p_h = \prod_{i=1}^{h} f_i$. We
let $M_0$ denote a constant indicative of the number of registers on a
processor. We let $C_{i-1}$ denote the cost parameter representing the
cost of servicing a cache miss at level $(i-1)$ from level $i$. A
cache miss that must be serviced from level $j$ requires $C'_j =
C_0+C_1+\dots+C_{j-1}$ time steps.  For simplicity, we let the cache
block be one word long.  This limitation can be relaxed and analyzed
as in \cite{BlellochFiGi11}.

\para{Terminology.} 
A task is \defn{done} when all the leaf nodes (strands) associated with
its subtree have been executed.  A dataflow arrow originating at a leaf
node in the spawn tree is \defn{satisfied} when its source node is done.
A dataflow arrow originating at an internal node of the spawn tree is
\defn{satisfied} when all its descendants (rewritings) according to the
fire rules have been satisfied.  
A task is \defn{fully ready} or just \defn{ready} when all the
incoming dependencies (dataflow arrows) originating outside the
subtree are satisfied.  The \defn{size}, $\Sl{\cdot}$, of a task or a
strand is the number of distinct memory locations accessed by it.  We
assume that programs are statically allocated, that is all necessary
heap space is allocated up front and freed at program termination, so
that the size function is well defined. The size annotation can be
supplied by the programmer or can be obtained from a profiling
tool. If the size of a task in the spawn tree is not specified, we
inherit the annotation from its lowest annotated ancestor in the spawn
tree. We call a task \defn{$M$-maximal} if its size is at most $M$,
but its parent in the spawn tree has size $>M$. A task is level-$i$
maximal in a PMH if it is $M_i$-maximal, $M_i$ being the size of a
level-$i$ cache. Note that even though an $M_i$-maximal task is not
ready, a $M_j$-maximal subtask inside it (where $j<i$) can be ready.

\vspace{1ex}
\para{SB Schedulers.}
We define a space-bounded scheduler to be any scheduler that has the
anchoring and boundedness properties \cite{SimhadriBlFi15}:
\begin{itemize}
\item[\textbf{Anchor:}]
As the spawn tree unfolds dynamically, we assign and \defn{anchor}
ready tasks to caches in the hierarchy with respect to which they are
maximal.  Tasks are \textit{allocated} a part of the subcluster rooted
at the assigned cache. The anchoring property requires that all the
leaves of the spawn tree of a task be executed by processors in the
part of the subcluster allocated.
\item[\textbf{Boundedness:}]
 Tasks anchored to a cache of size $M$ have a total size $\leq\sigma M$,
where $\sigma\in(0,1)$ is a scheduler chosen \defn{dilation parameter}.
\end{itemize}

There are several ways to maintain these properties and operate within
its constraints. The approach taken in \cite{BlellochFiGi11} is to
have a task queue with each anchored task that contains its subtasks
than can be potentially unrolled and anchored to the caches below it.
We adopt the same approach here (outlined below for convenience) for
the ND model with the \textbf{difference} being that we only anchor
and run ready subtasks. In the course of execution, ready tasks are
anchored to a suitable cache level (provided there is sufficient space
left), and each anchored task is \defn{allocated} subclusters beneath
the cache, based on the size of the task. Just as
in \cite{BlellochFiGi11}, a task of size $S$ anchored at level-$i$
cache is allocated
\[
g_i(S) = \min\{f_i,                                         
\max\{1,\floor{f_i(3S/\MM_i)^{\alpha'}}\} \}, \textrm{where}\  
\alpha' = \min\{\alpha_{max},1\}
\] 
level-$(i-1)$ subclusters \footnote{The factor 3 in the allocation
function is a detail necessary to prove Thm.\ref{thm:rt}.}
where $\alpha_{max}$ is the parallelizability
of the task, a term we will define shortly. All processors in the
subclusters are required to work exclusively on this task.  Initially, 
the root node of the spawn tree is anchored to the root of the PMH.

To find work, a processor traverses the path from the leaf it
represents in the tree towards the root of the PMH until it reaches
the lowest anchor it is part of. Here it checks for ready tasks in the
queue associated with this anchor, and if empty, re-attempts to find
work after a short wait.  Otherwise, it pulls out a task from the
queue.  If the task is from an anchor at the cache immediately above
the processor, i.e. {\bf at an $L_1$ cache}, it executes the subtask
by traversing the corresponding spawn tree in depth-first order.  If
the processor pulled this task out of an allocation at a {\bf cache at
level $i>1$}, it does not try to execute its strands (leaves)
immediately. Instead, it unrolls the spawn tree corresponding to the
task using the DRS and enqueues those subtasks that are either of size
$>M_{i-1}$, or not ready, in the queue corresponding to the
anchor. Those subtasks that cannot be immediately worked on due to
lack of space in the caches are also enqueued. However, if the
processor encounters a ready task that has size less than that of a
level-$j$ cache ($j<i$), and is able to find sufficient space for it
in the subcluster allocated to the anchor, the task is anchored at the
level-$j$ cache, and allocated a suitable number of subclusters below
the level-$j$ cache.  The processor starts unraveling the spawn tree
and finding work repeatedly.  When an anchored task is done, the
anchor, allocation and the associated resources are released for
future tasks. We also borrow other details in the design of the
space-bounded schedulers (e.g. how many subclusters are provisioned
for making progress on ``worst case allocations''? what fraction of
cache is reserved for tasks that ``skip cache levels''?)  from prior
work \cite{BlellochFiGi11}.

Roughly speaking, this scheduler uses all the partial parallelism
between level-$(i-1)$ maximal subtasks within a level-$i$ maximal
task. However, it does not use all the partial parallelism across
level-$(i-2)$ subtasks, especially those dataflow arrows between
level-$(i-2)$ subtasks in two different level-$(i-1)$ subtasks
(see~\figref{succ-levels}).

\begin{figure}[!th]
\includegraphics[width=\linewidth]{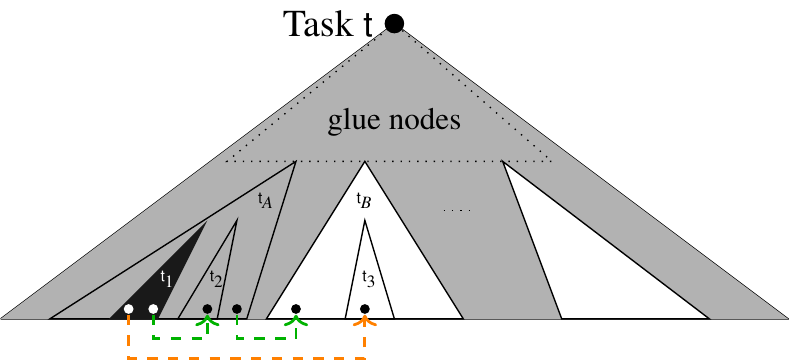}
\vspace{1ex}
\caption{Use of partial parallelism in the SB scheduler.  In this
  diagram, white represents tasks that are yet to start, gray
  represents running tasks, and black represents complete tasks.
  Green arrows represent dataflow arrows that may be used to start new
  tasks by the SB scheduler while orange datalow arrows are never
  immediately used.  Task $\Qt$ is level-$i$ maximal; tasks $\Qt_A$
  and $\Qt_B$ are level-$(i-1)$ maximal; tasks $\Qt_1,\Qt_2,$ and
  $\Qt_3$ are level-$(i-2)$ maximal.  Although subtask $\Qt_1$ has
  completed and has two outgoing dataflow edges, only $\Qt_2$, which
  is in the same level-$(i-1)$ maximal subtask ($\Qt_A$) can be
  started; $\Qt_3$ can not immediately started until subtask $\Qt_A$
  completes.
\label{fig:succ-levels}}
\end{figure}


\para{Metrics.} We now analyze the running time of the SB scheduler,
accounting for the cost of executing the work and load imbalance, but
not the overhead of the data structures need to keep track of anchors,
allocations, and the readiness of subtasks. We leave the optimization
of this overhead for a future empirical study.  The anchoring and
boundedness properties make it easy to preserve locality while trading
off some parallelism. Inspired by the analysis in
\cite{BlellochFiGi11,Simhadri13}, we develop a new analysis for the
ND model to argue that the impact of the loss of parallelism caused by
the anchoring property on load balance is not significant.

A critical consequence of the anchoring property of the SB scheduler
is that once a task is anchored to a cache, all the memory locations
needed for the task are loaded only once and are not forced to be
evicted until the completion of the task. This motivates the following
quantification of locality.  Given a task $\Qt$, decompose the spawn
tree into $M$-maximal subtasks, and ``glue nodes'' that hold these
trees together (this decomposition is unique).  Define the
\defn{parallel cache complexity (PCC)}, $\cc(\Qt;M)$, of task $\Qt$ 
to be the sum of sizes of the maximal subtrees, plus a constant
overhead from each glue node. This is motivated by the expectation
that a good scheduler (such as SB) should be able to preserve locality
within $M$-sized tasks given an cache of appropriate size, while it
might be too cumbersome to preserve locality across maximal
subtasks. \footnote{This definition is a generalization
of \cite[Defn.2]{BlellochFiGi11} for the ND model.  The full metric
measures cache complexity in terms of cache lines to model latency and
is also parameterized by a second parameter $B$: size of a cache
line. We set $B=1$ here for simplicity.  This simplification can be
reversed.}  The PCC metric differs from the another common metric for
locality of NP programs: the cache complexity $Q_1$ of the depth-first
traversal in the ideal cache model~\cite{AcarBlBl00}. Unlike $Q_1$,
$\cc$ does not depend on the order of traversal, but does not capture
data reuse across $M$-maximal subtasks, which is a smaller order term
in our algorithms.

Note that $M$ is a free parameter in this analysis. When the context
is clear, we often replace the task $\Qt$ in the $\cc$ expression with
a size parameter corresponding to the task, so that cache complexity
is denoted $\cc(N;M)$. With this notation we have the following bound
on the cache complexity of the algorithms in ~\secref{num}.

\begin{claim}
For dense matrices of size $N = n\times n$, the divide and conquer
classical matrix multiplication, Triangular System Solve, Cholesky
and LU factorizations, and the 2D analog of the Floyd-Warshall algorithm
in~\secref{num} have parallel cache complexity 
\[\cc(N;M) = O( N^{1.5}/M^{0.5}),\]
when $N>M$, with the glue nodes contributing an asymptotically
smaller term. The LCS algorithm has $\cc(n;M) = O(n^{2}/M)$ for input
of size $n>\sqrt{M}$. This is true even if the algorithms are
expressed in the NP model by replacing fire constructs with ``$\serial$''.
\label{clm:pcc-algos}
\end{claim}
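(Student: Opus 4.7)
The plan is to analyze the spawn tree of each algorithm directly, exploiting the fact that the parallel cache complexity depends only on subtask sizes and the tree's shape, not on dataflow arrows. Consequently the PCC of an ND program equals that of the corresponding NP program obtained by replacing every $\fire$ with $\serial$, so I would give a single analysis per algorithm and the bound transfers to both versions at once.

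For the cubic-work algorithms (MM, TRS, Cholesky, LU with partial pivoting, and 2D Floyd-Warshall), first I would observe that a recursive instance operating on matrices of linear dimension $k$ has memory footprint $\Theta(k^2)$, touching a constant number of $k \times k$ blocks, so a subtask is $M$-maximal precisely when $k = \Theta(\sqrt{M})$. Next, I would bound the number of $M$-maximal subtasks $N_M(n)$ by a recurrence of the form $N_M(n) \le c \cdot N_M(n/2)$ with base case $N_M(\Theta(\sqrt{M})) = \Theta(1)$, where $c$ is the effective branching factor of the binarized spawn tree (e.g.\ $c = 8$ for MM, capturing the $\Theta(n^3)$ total work; the other algorithms have MM-dominated recurrences with the same effective constant). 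This solves to $N_M(n) = O((n/\sqrt{M})^3) = O(n^3/M^{3/2})$. Finally, since each $M$-maximal subtask contributes at most $M$ to the sum of sizes and the number of glue nodes above the maximal level is $O(N_M(n))$ with constant overhead each, the total PCC is
\begin{equation*}
\cc(N;M) = O(N_M(n) \cdot M) + O(N_M(n)) = O(n^3/\sqrt{M}) = O(N^{1.5}/M^{1/2}),
\end{equation*}
with the glue-node term subsumed.

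For LCS the same framework applies with a $4$-way effective branching across the four quadrants of the $n \times n$ DP table, yielding $N_M(n) = \Theta((n/\sqrt{M})^2)$. Multiplying by the maximal size and adding glue-node overhead gives the stated bound on $\cc(n;M)$, and as before the LCS spawn tree (and thus its PCC) is identical whether $\tfire{\proc{HV}}$ and $\tfire{\proc{VH}}$ are interpreted as $\serial$ or $\fire$ constructs, so the NP version inherits the bound.

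The main obstacle I anticipate is handling algorithms whose recursive expansion mixes subtasks of multiple types and shapes: Cholesky's root expands into \proc{Cho}, \proc{TRS}, \proc{MMS}, and \proc{Cho} subtasks, and Toledo's LU with partial pivoting has a similarly heterogeneous recursion. For these, at each level and for each $M$-maximal task type, I would need to verify that the maximal level is reached at $k = \Theta(\sqrt{M})$ along every branch, and that summing contributions across branches still yields $O(n^3/M^{3/2})$ maximal subtasks overall. The cleanest route is to bound the auxiliary subtask recurrences (TRS, Cho) by the MM recurrence, since MM asymptotically dominates the work; one should also check that binarization of higher-arity $\parallel$ and $\serial$ constructs from the DRS section does not inflate the glue-node count beyond $O(N_M(n))$.
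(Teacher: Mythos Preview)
The paper does not actually prove this claim; it is asserted without argument and then used downstream. Your plan---observing that PCC depends only on the shape of the spawn tree and on subtask sizes, not on dataflow arrows, then counting $M$-maximal subtasks via a branching recurrence and summing their sizes---is the standard route and is correct for the cubic-work algorithms. The observation that replacing every $\fire$ with $\serial$ leaves the spawn tree (and hence the PCC) unchanged is exactly the right way to dispatch the ``even in the NP model'' clause in one stroke. Your worry about heterogeneous recursions (Cholesky, LU) is not a real obstacle: since every non-\proc{MM} subroutine spawns \proc{MM} subtasks that dominate the count, bounding each subroutine's number of $M$-maximal subtasks by the \proc{MM} recurrence is enough, and binarization only multiplies the glue-node count by a constant.

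There is, however, a genuine discrepancy you glossed over. For LCS as written in the paper, each recursive subtask on a $k\times k$ block of the DP table touches $\Theta(k^2)$ distinct locations, so it becomes $M$-maximal at $k=\Theta(\sqrt{M})$; your own recurrence then yields $\Theta(n^2/M)$ maximal subtasks each of size $\Theta(M)$, hence $\cc(n;M)=\Theta(n^2)$, not the $O(n^2/M)$ stated in the claim. Indeed, any algorithm that writes all $n^2$ cells of the table must have PCC at least $n^2$ when $B=1$. The $O(n^2/M)$ figure would follow only from a space-efficient variant whose subtask on a $k\times k$ region has footprint $\Theta(k)$, which is not what the paper's $X\assign\proc{LCS}(X)$ algorithm does. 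You wrote that your method ``gives the stated bound''; it does not, and you should flag the mismatch rather than paper over it---it is almost certainly an error in the claim, not in your argument.
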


As a direct consequence of the anchoring and boundedness properties,
which conservatively provision cache space, the following restatement
of \cite[Theorem 3] {BlellochFiGi11} applies to the ND model with the
same proof.
\begin{theorem}
\vspace{-3pt}
Suppose $\Qt$ is a task in ND program that is anchored at a level-$i$
cache of a PMH by a SB scheduler with dilation parameter $0<\sigma<1$
(i.e., a SB scheduler that anchors tasks of size at most $\sigma M_j$
at level $j$). Then for all cache levels $j\leq i$, the sum of
cache misses incurred by all caches at level $j$ is at most
$\cc(\Qt;\sigma\cdot M_j)$.
\end{theorem}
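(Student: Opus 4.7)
The plan is to follow the structure of the proof of Theorem 3 in \cite{BlellochFiGi11}, observing that the argument there only depends on the anchoring and boundedness properties of the scheduler, both of which we have retained in the ND setting. The key observation is that the set of memory locations touched by a task anchored at a level-$j$ cache is fixed (independent of the order in which its subtasks are executed), so the PCC metric, which also does not depend on execution order, is the right accounting device.

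First I would fix a level $j\le i$ and decompose the spawn tree of $\Qt$ in the unique way described in the definition of $\cc(\Qt;\sigma M_j)$: into maximal subtrees each of size at most $\sigma M_j$, connected by glue nodes. I would then argue that every level-$j$ cache miss charged during the execution of $\Qt$ can be attributed to either (i) the first access to a memory location inside some $\sigma M_j$-maximal subtree $\Qt'$ that has been anchored at a level-$j$ cache, or (ii) work done at a glue node. For case (i), the anchoring property guarantees that all leaves of $\Qt'$ execute in processors beneath the anchoring cache, so no memory location of $\Qt'$ needs to cross the level-$j$ boundary more than once after it is first loaded. The boundedness property ($\sum\Sl{\Qt'}\le\sigma M_j$ over tasks simultaneously anchored at a single level-$j$ cache) then guarantees that the cache is large enough to hold the working set without involuntary eviction, so each memory location of $\Qt'$ contributes at most one miss. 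Summing over all $\sigma M_j$-maximal subtasks therefore bounds the contribution from case (i) by $\sum_{\Qt'}\Sl{\Qt'}$, which is exactly the maximal-subtree part of $\cc(\Qt;\sigma M_j)$.

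For case (ii), glue nodes are exactly the internal spawn-tree nodes where the decomposition is cut, and executing one contributes only a constant overhead of misses, which matches the constant per-glue-node term in the definition of $\cc$. Combining cases (i) and (ii) yields the claimed bound $\cc(\Qt;\sigma M_j)$ on the total misses at level $j$ summed over all level-$j$ caches. Since this argument only uses the anchoring and boundedness properties and the fact that $\cc$ is defined in an execution-order-independent way, the proof goes through identically to \cite{BlellochFiGi11}.

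The main obstacle I anticipate is checking that the anchoring property as formulated for the ND model really does prevent involuntary evictions of memory locations belonging to a $\sigma M_j$-maximal subtask $\Qt'$. Specifically, in the NP setting the subtree of $\Qt'$ executes contiguously in time on its allocated subcluster; in the ND model, because only ready tasks are anchored and executed, one has to verify that the processors allocated to $\Qt'$ do not in the meantime get diverted to other anchored work that would pollute the level-$j$ cache. This follows from the scheduler's rule that processors in an allocation work exclusively on the anchored task, but it is the one step that must be checked explicitly when lifting the proof from the NP model to the ND model.
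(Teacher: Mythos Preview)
Your proposal is correct and takes essentially the same approach as the paper: the paper does not give a new proof at all, but simply states that this is a restatement of \cite[Theorem~3]{BlellochFiGi11} that ``applies to the ND model with the same proof'' as a direct consequence of the anchoring and boundedness properties. Your sketch is in fact more detailed than what the paper provides, and the obstacle you flag (that processors allocated to an anchored task work exclusively on it, so the cache is not polluted) is exactly the point one must check, which you resolve correctly.
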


In conjunction with~\clmref{pcc-algos}, this gives a bound on
the communication cost of the schedulers for ND algorithms. One can
verify from results on lower bounds on communication
complexity~\cite{BallardDeHoSc11} that these bounds are asymptotically
optimal.  If the scheduler is able to perfectly load balance a program
at every cache level on an $h$ level PMH with $p$ processors, we would
expect a task $\Qt$ to take
\begin{equation}
\frac{\sum_{i=0}^{h-1} \cc(\Qt;\sigma\cdot M_i)\cdot C_i}{p}
\label{eq:lb}
\end{equation}
time steps to complete, where $0<\sigma<1$ is the dilation parameter.

\begin{figure}[!t]
\includegraphics[width=\linewidth]{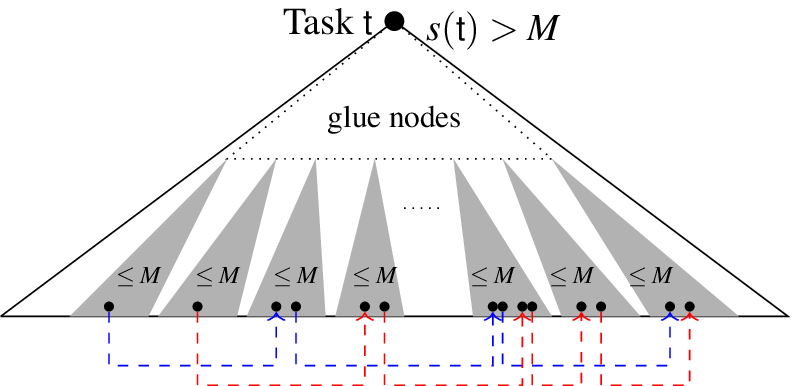}
\vspace{1ex}
\caption{$M$-maximal subtasks (in gray) and glue nodes in the spawn tree 
of a task $\Qt$. The PCC, $\cc(\Qt;M)$, is the sum of sizes of $M$-maximal
subtasks plus one miss for each glue node. The red and blue sets of arrows
represent two chains of dependencies in $\Qt$. The ECC, $\bcc{\Qt;M},$ is
determined by the maximum, among all such chains, of the sum of effective
depth of $M$-maximal subtasks in the chain, and the ratio 
$\cc(\Qt;M)/\Sl{\Qt}^\alpha$ for a parameter $\alpha>0$.
\label{fig:CC}}
\vspace{1ex}
\end{figure}

However, if the algorithm does not have sufficient parallelism for the
PMH or is too irregular to load balance, we would expect it take
longer. Furthermore, since the number of processors assigned to a task
by a SB scheduler depends on its size, unlike in the case of
work-stealing style schedulers, a work-span analysis of programs is
not an accurate indicative of their running time.  A more
sophisticated cost model that takes account of locality,
parallelism-space imbalances, and lack of parallelism at different
levels of recursion is necessary.

In the NP model~\cite[Defn. 3]{BlellochFiGi11}, this was quantified by
the {\bf effective cache complexity metric (ECC, $\bcce$)}. We provide
a new definition of this metric for the ND model.  ECC attempts to
capture the cost of load balancing the program on hypothetical machine
with \defn{machine with parallelism at most} $\alpha$ --- a PMH which
has at most $f_i \le (M_i/M_{i-1})^\alpha$ level-$(i-1)$ caches
beneath each level-$i$ cache for all $1\le i\le h$.

The metric assigns to each subtree of a spawn tree an estimate of its
complexity, measured in cache miss cost equivalents, when mapped to a
PMH by a SB scheduler. The estimate is based on its position in the
spawn tree and its cache complexity in the PCC metric. The metric has
two free parameters: $\alpha$ which represents the parallelism
available of a hypothetical machine, and $M$ which represents the size
of one of the caches in the hierarchy with respect to which the spawn
tree is being analyzed.

\begin{definition}[Effective Cache Complexity (ECC)]
\label{def:qhat-fire}

Let $\Qt$ be a task in the ND model.  Unroll the spawn tree of $\Qt$,
applying the DAG rewriting rules until all the leaves of the tree are
$M$-maximal. Regard all dataflow arrows (solid or dashed) between the
leaves to be dependencies (see~\figref{CC}).

\noindent The ECC of a $M$-maximal task $\Qt's$ is
 $\bcc{\Qt';M} = \cc(\Qt';M)$.\\

\noindent The ECC of $\Qt$ is  $\bcc{\Qt;\MM}$, where
  $\left\lceil\frac{\bcc{\Qt;\MM}}{\Sl{\Qt}^\alpha}\right\rceil = $

\begin{eqnarray*}
 \max\left\{ 
\begin{array}{lll}
 \max_{\chi \in chains(\Qt,M)}
\left\{ 
\sum_{\Qt_i\in\chi}
\left\lceil\frac{\bcc{\Qt_i;\MM;\kappa}}{\Sl{\Qt_i}^\alpha}\right\rceil
\right\}
 & \mbox{(depth dominated)}\\
\left\lceil\frac{\sum_{\Qt_i\in maxiamal(\Qt,M)} \bcc{\Qt_i;\MM}}{\Sl{\Qt}^\alpha}\right\rceil
 & \mbox{(work dominated)}
\end{array}
\right.
\end{eqnarray*}
where $chains(\Qt,M)$ represents the set of chains of dependence edges
between $M$-maximal tasks, $maximal(\Qt,M)$, in the spawn tree of $\Qt$.
\end{definition}

The work dominated term has the same denominator as the left hand side
and thus captures the total amount of cache complexity in the spawn
tree (summation over leaves).  The depth dominated term captures the
critical path for the SB scheduler.  The term
$\lceil{\bcc{\Qt;\MM}}/{\Sl{\Qt}^\alpha}\rceil$ is the proxy for span
in our analysis and we call it the \defn{effective depth} of the task
$\Qt$. The depth dominated ensures that the effective depth defined by
ECC for a task is at least the sum of the effective depths of all
$M$-maximal tasks along any chain between $M$-maximal tasks induced
by DAG rewriting with respect to the fire rules.
The definition of ECC is such that:
\begin{enumerate}
\item
 In the range $\alpha\in [0,\alpha_{max})$, for some
algorithm-specific constant $\alpha_{max}$, $\bcc{M} \leq c_U \cc(M)$
for all $M>M_U$, for some positive universal constants $c_U, M_U$.
\item
 On a machine with parallelism $\beta \leq \alpha_{max}-\epsilon$ for
some arbitrarily small positive constant $\epsilon$, the running time
of the SB scheduler is within a constant factor of the perfectly load
balanced scenario in equation~\ref{eq:lb} (see~\thmref{rt}).
\item For NP programs, it coincides with the definition in \cite{BlellochFiGi11}.
\end{enumerate}

\para{Parallelizability of an Algorithm.} 
For the above reasons, we refer to the $\alpha_{max}$ of an algorithm
as its \defn{parallelizability} just as in \cite{Simhadri13}.  The
greater the parallelizability of the algorithm, the more efficient it
is to schedule on larger machines. When the parallelizability of the
algorithm asymptotically approaches the difference between the work
and the span exponents of the algorithm, we call it \defn{reasonably
regular}.  For an input of size $N = n\times n$, TRS, Cholesky and 2D
Floyd-Warshall have work exponent $1.5$ and span exponent $0.5$, and
the difference between them is $1$. In many divide-and-conquer
algorithms, such as in \cite{BlellochGiSi10}, where the NP model does
not induce too many artificial dependencies, the parallelizability
exceeds that of largest shared memory machines available today. In
such algorithms SB schedulers have been empirically shown to be
effective at managing locality without compromising load balance, and
as a consequence, capable of outperforming work-stealing
schedulers \cite{SimhadriBlFi14}. However, this is not the case for
algorithms in~\secref{num}, which lose some parallelism when expressed
in the NP model.

For example, in the NP model, the parallelizability (w.r.t. cache size
$M$) of the cache-oblivious matrix multiplication is $\alpha_{max,MM}
= 1-\log_{M}(1+c_{MM})$ for some small constant $c_{MM}$
(see~\clmref{CC-MM} in Appendix \ref{app-calc}), which is as high as
it can be. We expect the parallelizability of nested parallel TRS
algorithm to be less than $\alpha_{max,MM}$. In fact, for an $n\times
n$ upper triangular $T$ and a right hand side $B$ of size $N = n\times
n$, the parallelizability the nested parallel TRS algorithm
in~\eqref{TRS-NP} which is $1-\log_{\min\{N/M,M\}}(1+c_{TRS})$ (see
~\clmref{CC-TRS} in Appendix \ref{app-calc}). This is smaller than the
parallelizability of matrix multiplication when $N/M < M$. Since L3
caches are of the order of 10MB, the reduced parallelism adversely
affects load balance even in large instances that are of the order of
gigabytes (also empirically observed in~\cite{SimhadriBlFi15}).  When
expressed in the ND model, the parallelizability of TRS improves.
This can be seen from the geometric picture
in~\figref{trs-lattice-dag} where the depth dominated term
corresponding to the longest chain has effective depth
$c(N^{0.5}/M^{0.5})M^{1-\alpha} + c'$, which is less than the work
dominated term when $\alpha < 1-\log_{M}(1+c_{TRS})$.  This is the
parallelizability of TRS in the ND model.  This is also the case for
other linear algebra algorithms including Cholesky and LU
factorizations.

\vspace{3pt}
\para{Running time analysis.}
The main result of this section is~\thmref{rt} which shows
that SB schedulers can make use of the extra parallelizability of
programs expressed in the ND model.
\begin{theorem}
\vspace{-3pt}
Consider an $h$-level PMH with $p_h$ processors where a level-$i$ cache
has size $M_i$, fanout $f_i$ and cache miss cost $C_i$. Let $\Qt$ be a
task such that $S(\Qt;\BB)>f_h\MM_{h-1}/3$ (the scheduler allocates the
entire hierarchy to such a task) with parallelizability $\alpha_{max}$
in the ND model. Suppose that $\alpha_{max}$ exceeds the parallelism
of the machine by a constant. The running time of $\Qt$ is no more than:
\begin{multline*}
  \frac{\sum_{j=0}^{h-1} \bcc{\Qt; \MM_j/3}\cdot C_j}{p_h}\cdot v_h,
  \ \ \textrm{where overhead}\ v_h\ \textrm{is}\ \\ 
  v_h=2\prod_{j=1}^{h-1} \left(\frac{1}{k} +
    \frac{f_j}{(1-k)(M_j/M_{j-1})^{\alpha'}}\right),
\end{multline*}
for some constant $0<k<1$, where $\alpha' = \min\set{\alpha_{max},1}$.
\label{thm:rt}
\end{theorem}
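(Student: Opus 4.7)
The plan is to prove this by strong structural induction on the number $h$ of cache levels in the subhierarchy allocated to $\Qt$, following the template of Theorem 4 of Blelloch-Fineman-Gibbons for the NP model but adapted to cope with arbitrary dataflow chains induced by fire rules. At level $h=1$, a task of size $\leq \sigma\MM_1$ executes sequentially on a processor; its runtime is the sum over cache levels of (misses at that level) $\times C_j$, and by the earlier cache complexity theorem each miss count at level $j$ is bounded by $\cc(\Qt;\sigma\MM_j)$, which for an $\MM_1$-maximal task equals $\bcc{\Qt;\MM_j/3}$. This matches the claimed bound with $v_1=1$.

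For the inductive step, suppose the result holds for subhierarchies with fewer than $h$ levels. A task $\Qt$ anchored at the level-$h$ cache is unrolled by the SB scheduler until subtasks become $\MM_{h-1}$-maximal or cannot be anchored immediately (insufficient residual space, or not yet ready). Each ready $\MM_{h-1}$-maximal subtask $\Qt_i$ is anchored at some level-$(h-1)$ cache and allocated $g_{h-1}(\Sl{\Qt_i})$ subclusters, whereupon the inductive hypothesis applies to bound its completion time. The total running time for $\Qt$ is then analyzed in two cases corresponding to the two branches of Definition~\ref{def:qhat-fire}. In the \emph{work-dominated} case, I would charge each processor-cycle to either useful work on some $\Qt_i$ or to slack within a level-$(h-1)$ allocation; the boundedness property says allocated sizes sum to at most $\sigma\MM_h$, while the allocation function $g_{h-1}$ ensures each $\Qt_i$ gets roughly $f_{h-1}(\Sl{\Qt_i}/\MM_{h-1})^{\alpha'}$ subclusters, giving the per-level overhead factor $f_{h-1}/((1-k)(\MM_{h-1}/\MM_{h-2})^{\alpha'})$ after one multiplies the inductive hypothesis through. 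The constant $k$ reflects the fraction of the level-$h$ cache kept in reserve for tasks that skip a level, yielding the additive $1/k$ term.

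The principal obstacle, and the point where the ND analysis diverges most from the NP analysis, is the \emph{depth-dominated} case: a processor allocated to $\Qt$ may be idle waiting on an incoming dataflow arrow produced by a sibling subtask, and because fire rules can induce arbitrary chains of $\MM_{h-1}$-maximal dependencies across different level-$(h-1)$ anchors, one cannot appeal to a simple serial composition argument. My plan is a credit/charging argument: whenever a processor is idle at time $t$, there exists a not-yet-completed task $\Qt_j$ along some chain $\chi\in\mathrm{chains}(\Qt,\MM_{h-1})$ that is currently executing; charge the idle time step against the concurrent work on $\Qt_j$. Since every chain contributes $\sum_{\Qt_j\in\chi}\lceil\bcc{\Qt_j;\MM_{h-1}/3}/\Sl{\Qt_j}^\alpha\rceil$ effective-depth units of work by the inductive hypothesis, and this is exactly the first branch of Definition~\ref{def:qhat-fire} multiplied by $\Sl{\Qt}^\alpha$, total idle time across the $p_h$ processors is bounded by $p_h$ times the effective depth of $\Qt$. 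One then observes that the parallelizability assumption $\alpha_{max}>$ (machine parallelism) $+\,\Omega(1)$ makes the work-dominated branch strictly dominant asymptotically, so the depth-dominated contribution is absorbed into the $1/k$ factor at level $h$.

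Finally, telescoping the overhead factor across the $h-1$ inductive steps gives $v_h=2\prod_{j=1}^{h-1}\bigl(1/k + f_j/((1-k)(\MM_j/\MM_{j-1})^{\alpha'})\bigr)$ and summing the per-level cache-miss contributions gives the leading term $\sum_{j=0}^{h-1}\bcc{\Qt;\MM_j/3}\cdot C_j/p_h$. The only subtlety I would need to verify carefully is that the credit argument for the depth-dominated case is compatible with the scheduler's behavior of \emph{not} immediately following dataflow edges that cross $\MM_{h-1}$-maximal boundaries (see~\figref{succ-levels}); this requires showing that such cross-boundary stalls are rare enough that the chain-length bound of the ECC already captures them, which is precisely what the definition of ``chains of dependence edges between $M$-maximal tasks'' was designed for.
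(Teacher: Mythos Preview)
Your induction-on-$h$ with a Brent-style credit argument is the right shape, but you are carrying the wrong invariant, and this creates a genuine gap. The paper does not induct on the quantity $\sum_{j}\bcc{\Qt;\MM_j/3}\,C_j$ directly. It introduces an intermediate single-valued cost, the \emph{latency-added effective work} $\laew{\Qt}$, which charges each instruction its actual service cost $\rho(x)=\sum_j\rho_j(x)$ under the SB schedule and whose recursive rule unrolls one cache level at a time (a task of size between $M_i$ and $M_{i+1}$ is decomposed into its $M_i$-maximal subtasks). The proof then factors into three pieces: Lemma~\ref{lem:runtime} is your Brent-style induction, but with $\laew{\cdot}$ as the invariant, giving runtime $\leq \laew{\Qt}\cdot v_h/p_h$; the Separation Lemma (Lemma~\ref{lemma:lvl-decomp}) shows $\laew{\Qt}\leq\sum_j\biw{(j)}{\Qt}$; and Lemma~\ref{lemma:leveldecompapp} bounds each $\biw{(j)}{\Qt}$ by $\bcc{\Qt;\MM_j/3}\,C_j$.

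The place where a direct induction breaks is your depth-dominated step. You bound idle time along a chain $\chi$ of $\MM_{h-1}$-maximal subtasks and invoke the inductive hypothesis on each $\Qt_i\in\chi$, but that hypothesis is a sum over \emph{all} lower levels $j$, not just $h-1$ as you wrote. Closing the induction would require, for every $j<h-1$, that $\sum_{\Qt_i\in\chi}\lceil\bcc{\Qt_i;\MM_j/3}/\Sl{\Qt_i}^\alpha\rceil \leq \lceil\bcc{\Qt;\MM_j/3}/\Sl{\Qt}^\alpha\rceil$ with $\chi$ a chain at granularity $\MM_{h-1}$. But $\bcc{\Qt;\MM_j/3}$ is defined via chains of $\MM_j$-maximal subtasks, and the level-$j$ critical chain inside each $\Qt_i$ need not concatenate across the $\Qt_i$'s into a valid level-$j$ chain in $\Qt$: the dataflow arrow $\Qt_1\to\Qt_2$ at granularity $\MM_{h-1}$ may, under the fire rules, refine to edges that miss the endpoints of those internal critical chains. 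Different cache levels can have different critical chains, and your argument tacitly assumes they align. The Separation Lemma is precisely the device that sidesteps this: it bundles all levels into one cost, takes the chain-max once for that combined cost, and only afterwards splits back into per-level terms via the elementary inequality $\max_\chi\sum_j(\cdot)\leq\sum_j\max_\chi(\cdot)$. That decoupling step is the missing idea in your proposal.
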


The theorem says that when the machine parallelism is no greater than
the parallelizability of the algorithm in the ND model, the algorithm
runs within a constant factor ($v_h$) of the perfectly load balanced
scenario in~\eqref{lb}.  Relating this theorem to the definition of
machine parallelism, we infer that for highly regular algorithms
considered in this paper, the SB scheduler can effectively use up to
$O(N^{1-c'}/M_{h-1})$ level-$(h-1)$ subclusters for some arbitrarily
small constant $c'<0$.

We prove this theorem using the notion of effective work, the
separation lemma (lemma~\ref{lemma:lvl-decomp}) and a work-span
argument based on effective depth as in~\cite{BlellochFiGi11}.
The \defn{latency added effective work} is similar to the effective
cache complexity, but instead of counting just cache misses at one
cache level, we add the cost of cache misses at each instruction.
The cost $\rho(x)$ of an instruction $x$ accessing location $m$ is
$\rho(x)=W(x)+C'_{i}$, where $W(x)$ is the work, and $C'_i = C_0 + C_1
+\dots + C_{i-1}$ is the cost of a cache miss if the scheduler causes
the instruction $x$ to fetch $m$ from a level-$i$ cache in the
PMH. The instruction would need to incur a cache miss at level-$i$ if
it is the first instruction within the unique maximal level-$i$ task
that accesses a particular memory location.  Using this
per-instruction cost, we define effective work $\laew{.}$ of a task
using structural induction in a manner that is deliberately similar to
that of $\bcc{.}$.

\begin{definition}[Latency added cost]
With cost $\rho$ assigned to instructions, the \defn{latency added
effective work} of a task $\Qt$, or a strand $\Qs$ nested inside a
task $\Qt$ (from which it inherits space declaration) is:
\noindent\rulestrand
\[
\laew{\Qs} = \Sl{\Qt}^\alpha \sum_{x\in\Qs} \rho(x).
\]
\noindent\ruletask  For task $\Qt$ of size between $M_i$ and $M_{i+1}$,
 the l.a.e.w. is $\laew{\Qt}$, where
  $\left\lceil\frac{\laew{\Qt}}{\Sl{\Qt}^\alpha}\right\rceil = $
\begin{eqnarray*}
 \max\left\{ 
\begin{array}{lll}
 \max_{\chi \in chains(\Qt,M)}
\left\{ 
\sum_{\Qt_i\in\chi}
\left\lceil\frac{\laew{\Qt_i}}{\Sl{\Qt_i}^\alpha}\right\rceil
\right\}
 & \mbox{(depth dominated)}\\
\left\lceil\frac{\sum_{\Qt_i\in maximal(\Qt,M)} \laew{\Qt_i}}{\Sl{\Qt}^\alpha}\right\rceil
 & \mbox{(work dominated)}
\end{array}
\right.
\end{eqnarray*}
where $chains(\Qt,M)$ represents the set of chains of dependence edges
between $M$-maximal tasks, $maximal(\Qt,M)$, in the spawn tree of $\Qt$.
\end{definition}

Because of the large number of machine parameters involved
($\{\MM_i$,$C_i\}_i$ etc.), it is undesirable to compute the latency
added work directly for an algorithm.  Instead, we will show that the
latency added effective work can be upper bounded by a sum of per
(cache) level machine costs $\biw{(i)}{\cdot}$ that can, in turn be
bounded by machine parameters and ECC of the algorithm.  For
$i\in[h-1]$, $\biw{(i)}{\Qt}$ of a task $\Qt$ is computed exactly like
$\laew{\Qc}$ using a different base case: for each instruction $x$ in
$\Qc$, if the memory access at $x$ costs at least $C'_i$, assign a
cost of $\rho_{i}(x) = C_{i}$ to that node.  Else, assign a cost of
$\rho_i(x) = 0$.  Further, we set $\rho_0(x) = W(x)$, and define
$\biw{(0)}{\Qc}$ in terms of $\rho_o(\cdot)$. It also follows from
these definitions that $\rho(x) =
\sum_{i=0}^{h-1}\rho_i(x)$ for all instructions $x$.

\begin{lemma}
\textbf{Separation Lemma}: On an $h$-level PMH, and for a parameter
$\alpha>0$, for a task $\Qt$ with size at least $M_{h-1}$, we have:
\[
 \left\lceil \frac{\laew{\Qb}}{\Sl{\Qt}^\alpha} \right\rceil
\leq
 \left\lceil \frac{\sum_{i=0}^{h-1} \biw{(i)}{\Qt}.}{\Sl{\Qt}^\alpha} \right\rceil
\]
\label{lemma:lvl-decomp}
\end{lemma}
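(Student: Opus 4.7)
The plan is to prove the inequality by structural induction on the spawn tree of $\Qt$, exploiting the fact that $\laew{\cdot}$ and each $\biw{(i)}{\cdot}$ obey the \emph{identical} depth-vs.-work recurrence and differ only in the per-instruction cost assigned at strands. The pointwise decomposition $\rho(x) = \sum_{i=0}^{h-1} \rho_i(x)$, which follows immediately from the definitions of the per-level costs, is what makes the bound plausible at all; it supplies equality at the base of the induction.

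\textbf{Base case.} At a strand $\Qs$ nested inside a task $\Qt$, linearity of the sum over instructions gives
\[
\laew{\Qs} \;=\; \Sl{\Qt}^\alpha \sum_{x\in\Qs}\rho(x)
\;=\; \sum_{i=0}^{h-1}\Sl{\Qt}^\alpha \sum_{x\in\Qs}\rho_i(x)
\;=\; \sum_{i=0}^{h-1}\biw{(i)}{\Qs},
\]
so the desired inequality holds with equality at every leaf of the spawn tree.

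\textbf{Inductive step.} Write $D^L(\Qt),W^L(\Qt)$ for the depth- and work-dominated terms that appear inside the $\max$ defining $\lceil \laew{\Qt}/\Sl{\Qt}^\alpha\rceil$, and $D^{(i)}(\Qt),W^{(i)}(\Qt)$ for the analogues inside each $\lceil \biw{(i)}{\Qt}/\Sl{\Qt}^\alpha\rceil$. Assuming the inequality for all proper sub-tasks, one first shows $W^L(\Qt)\le\sum_i W^{(i)}(\Qt)$ by summing the inductive hypothesis over all $M$-maximal sub-tasks, swapping the order of summation over $i$ and over sub-tasks, and using $\lceil\sum_i x_i\rceil\le\sum_i\lceil x_i\rceil$. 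One then shows $D^L(\Qt)\le\sum_i D^{(i)}(\Qt)$ by fixing a chain $\chi$, applying the hypothesis link by link, swapping the order of summation over $i$ and over the chain, and finally pulling the maximum over chains inside the $i$-sum via $\max_\chi\sum_i(\cdots)\le\sum_i\max_\chi(\cdots)$. Combining these with the elementary fact $\max(a+b,c+d)\le\max(a,c)+\max(b,d)$ yields
\[
\left\lceil\frac{\laew{\Qt}}{\Sl{\Qt}^\alpha}\right\rceil
=\max(D^L,W^L)
\le\sum_i\max(D^{(i)},W^{(i)})
=\sum_i\left\lceil\frac{\biw{(i)}{\Qt}}{\Sl{\Qt}^\alpha}\right\rceil.
\]

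\textbf{Main obstacle and closing step.} The hazard I anticipate is the direction of the ceiling inequality: in general $\lceil\sum_i x_i\rceil\le\sum_i\lceil x_i\rceil$ points the wrong way for us. The way around it is to pin down the natural convention that $\biw{(i)}{\Qt}$ is set to $\Sl{\Qt}^\alpha\cdot\max(D^{(i)},W^{(i)})$, the largest value consistent with the ceiling equation defining it, so that $\biw{(i)}{\Qt}/\Sl{\Qt}^\alpha$ is already integer-valued. Under this convention $\sum_i\lceil\biw{(i)}{\Qt}/\Sl{\Qt}^\alpha\rceil=\lceil\sum_i\biw{(i)}{\Qt}/\Sl{\Qt}^\alpha\rceil$, which closes the induction. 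Verifying that this convention propagates consistently through the recursive definition---so that ceilings are genuinely free everywhere except at the top---is the one bookkeeping step that I expect will require the most care.
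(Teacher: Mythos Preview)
Your proposal is correct and follows essentially the same structural-induction argument as the paper: the base case uses $\rho=\sum_i\rho_i$ verbatim, and the inductive step pushes the maximum over the work/chain terms inside the sum over levels $i$. The only cosmetic difference is that the paper indexes the work-dominated term as the ``$0$-th chain'' and handles all terms uniformly rather than splitting into $D$ and $W$ as you do; your explicit attention to the ceiling convention is actually more careful than the paper, which simply writes the final equality without comment.
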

\begin{proof}
  The proof is based on induction on the structure of the task
in terms of decomposition into strands and maximal tasks.
  For the base case of the induction, consider the strand $\Qs$ at the
  lowest level in the spawn tree. If $S(\Qs)$ denotes the space of the
  strand or the task immediately enclosing $\Qs$ from which it
  inherits space declaration, then by
  definition \begin{align*} \laew{\Qs}
  &= \left(\sum_{x\in \Qs} \rho(x)\right)\cdot \Sl{\Qs}^\alpha \leq \left(\sum_{x\in \Qs} \sum_{i=0}^{h-1}\rho_i(x)\right)\cdot \Sl{\Qs}^\alpha\\
  &= \sum_{i=0}^{h-1}\left(\sum_{x\in \Qs} \rho_i(x)\cdot \Sl{\Qs}^\alpha\right)
  = \sum_{i=0}^{h-1} \biw{(i)}{\Qs}.  \end{align*}

For a task $\Qt$ corresponding to a spawn tree $T$, the \emph{latency
added effective depth}
$\left\lceil{\laew{\Qt;\MM}}/{\Sl{\Qt}^\alpha}\right\rceil$ is either
defined by the work or the depth dominated term which is one of the
chains in $chains(T)$, the set of chains of level-$(h-1)$ maximal
tasks in the spawn tree $T$.  Index the work dominated term as the
$0$-th term and index the chains in $chains(\Qt,M)$ in some order
starting from $1$.  Suppose that of these terms, the term that
determines $\left\lceil{\laew{\Qt;\MM}}/{\Sl{\Qt}^\alpha}\right\rceil$
is the $k$-th term. Denote this by $T_k$, and the $r$-th summand in
the term by $T_{k,r}$.  Similarly, consider the terms for evaluating
each of $\biw{(l)}{\Qt}$ -- which are numbered the same way as in
$\laew{\Qt}$ -- and suppose that the $k_l$-th term (denoted by
$T^{(l)}_{k_l}$) on the right hand side determines the value of
$\biw{(l)}{\Qb}$.  Then,

\begin{equation*}
 \left\lceil \frac{\laew{\Qb}}{\Sl{\Qt}^\alpha} \right\rceil
 = \sum_{r\in T_k} T_{k,r}  \leq \sum_{r\in T_k} \sum_{l=0}^{h-1} T^{(l)}_{k,r},
\end{equation*}
where the inequality is an application of the inductive hypothesis.
Further, by the definition of  $T^{(l)}_{k_l}$ and $T^{(l)}_{k}$, we have

\begin{equation*}
\sum_{r\in T_k} \sum_{l=0}^{h-1} T^{(l)}_{k,r}
    \leq  \sum_{r\in T_{k_l}} \sum_{l=0}^{h-1} T^{(l)}_{k_l,r}
    = \left\lceil \frac{\sum_{l=0}^{h-1} \biw{(l)}{\Qt}}{\Sl{\Qt}^\alpha} \right\rceil,
\end{equation*}
  which completes the proof.
\end{proof}
With the separation lemma in place for the ND model, the proof of
\thmref{rt} follows from the two lemmas which we adapt from
\cite{BlellochFiGi11}. The first is a bound on the per level latency
added effective work term in terms of the effective cache complexity.
The second is a bound on the runtime in terms of the latency added
effective work using a modified work-span analysis akin to
 Brent's theorem.

\begin{lemma}
  Consider an $h$-level PMH and a task (or a strand) $\Qt$. 
  If $\Qt$ is scheduled on this PMH using a space-bounded
  scheduler with dilation $\sigma = 1/3$, then
$\laew{\Qt} \leq \sum_{i=0}^{h-1} \bcc{\Qt; \MM_i/3, \BB}\cdot C_i$.
  \label{lemma:leveldecompapp}
\end{lemma}
\begin{lemma}
  Consider an $h$-level PMH and a task with parallelizability with
  $\alpha_{max}$ that exceeds the parallelism of the PMH by a small
  constant. Let $\alpha' = \min\set{\alpha_{max},1}$.
  Let $N_i$ be a task or strand which has been
  assigned a set ${\cal U}_t$ of $q \leq g_i(S(N_i))$
  level-$(i-1)$ subclusters by the scheduler. Letting $\sum_{V\in{\cal
  U}_t} (1-\mu(V)) = r$ (by definition, $r\leq \lvert{\cal U}_t\rvert
  = q$), the running time of $N_i$ is at most:
\begin{align*}
\frac{\laew{N_i}}{rp_{i-1}}\cdot v_i &,
\ \ \textrm{where overhead}\ v_i\ \textrm{is}\ \\ 
v_i&=2\prod_{j=1}^{i-1} \left(\frac{1}{k} +
  \frac{f_i}{(1-k)(M_i/M_{i-1})^{\alpha'}}\right).
\end{align*}
for some constant $0<k<1$.
  \label{lem:runtime}
\end{lemma}
The proofs of these two lemmas follow the same arguments as
in~\cite{BlellochFiGi11} with minor, but straightforward, modifications
that account for the new definition of the ECC in the ND model.

\punt{
\begin{lemma}
  Consider an $h$-level PMH and a task (or a strand) $\Qt$. 
  If $\Qt$ is scheduled on this PMH using a space-bounded
  scheduler with dilation $\sigma = 1/3$, then
$\laew{\Qt} \leq \sum_{i=0}^{h-1} \bcc{\Qt; \MM_i/3, \BB}\cdot C_i$.
  \label{lemma:leveldecompapp}
\end{lemma}
\begin{lemma}
  Consider an $h$-level PMH and a task with parallelizability with
  $\alpha_{max}$ that exceeds the parallelism of the PMH by a small
  constant. Let $\alpha' = \min\set{\alpha_{max},1}$.
  Let $N_i$ be a task or strand which has been
  assigned a set ${\cal U}_t$ of $q \leq g_i(S(N_i))$
  level-$(i-1)$ subclusters by the scheduler. Letting $\sum_{V\in{\cal
  U}_t} (1-\mu(V)) = r$ (by definition, $r\leq \lvert{\cal U}_t\rvert
  = q$), the running time of $N_i$ is at most:
\begin{align*}
\frac{\laew{N_i}}{rp_{i-1}}\cdot v_i &,
\ \ \textrm{where overhead}\ v_i\ \textrm{is}\ \\ 
v_i&=2\prod_{j=1}^{i-1} \left(\frac{1}{k} +
  \frac{f_i}{(1-k)(M_i/M_{i-1})^{\alpha'}}\right).
\end{align*}
for some constant $0<k<1$.
  \label{lem:runtime}
\end{lemma}
The proofs of these two lemmas follow the same arguments as
in~\cite{BlellochFiGi11} with minor, but straightforward, modifications
that account for the new definition of the ECC in the ND model.

 Note that we did not use the fact that
  some of the components were work or cache complexities. The proof
  only depended on the fact that $\rho(x) = \sum_{i=0}^{h-1}
  \rho_i(x)$ and the structure of the composition rules.
  $\rho$ could have been  replaced with any other kind of work and
  $\rho_i$ with its decomposition.

Consider the {\bf rule for a leaf node} $\ell$ nested immediately
within a task $\Qt$ with size $\Sl{\Qt}$: a SB scheduler might assign
it up to $\Sl{\ell}^\alpha$.  Since the leaf node has no parallelism
it uses only one processor for $\cc(\ell ; \MM; \kappa)$
``steps''. However, we make it ``pay'' the cost of keeping
$\Sl{\ell}^\alpha$ processors idle for $\cc(\ell; \MM; \kappa)$
``steps''.  Consider the {\bf rule for the parallel block} $\Qb
= \Qt_1\|\Qt_2\|\dots\|\Qt_k$, which is shorthand for a subtree rooted
at node $\Qb$ that is a disjoint union of a few $\parallel$ operators
and subtrees representing tasks $\Qt_i$.  We take the space of a
parallel block to be the size of the task in which it is immediately
nested, say $\Qt$. A lower bound on the complexity of $\Qb$ is the sum
of the complexities of the tasks $\Qt_i$, which gives rise to the
``work dominated'' term. Another possibility is that one of the tasks
in the parallel block takes much longer than the rest, and therefore
defines the complexity of $\Qb$.  To account for this, we use the term
${\bcc{\Qt_i;\MM}}/{\Sl{\Qt_i}^\alpha}$ as a measure of
the \defn{effective depth} of $\Qt_i$, i.e., how long it takes when
assigned $\Sl{\Qt_i}^\alpha$ processors. The ``depth dominated'' term
proposes that the effective depth of $\Qt$ is at least as much as the
effective depth of any of $\Qt_i$.  The {\bf rule for tasks} is
similarly explained as the composition of effective depths.  Thus this
metric is at once a generalization of work, span, and metrics of
locality, all of which can be recovered by setting parameters in
$\bcc{\Qt;M}$ appropriately (e.g., work is $\widehat{Q}_{0}(\Qt;0)$.

\begin{definition}\label{def:qhat}
\vspace{-8pt}
For cache parameter $M$ and parallelism parameter $\alpha$,
the \defn{effective cache complexity} of a leaf $\ell$, parallel block
$\Qb$, or task $\Qt$ starting at cache state $\kappa$ is defined as:

\noindent\comprule{leaf} Let $\Qt$ be the nearest containing task of
leaf $\ell$
\vspace{-4pt}
\[
\bcc{\ell; \MM; \kappa} = \cc(\ell ; \MM; \kappa) \times \Sl{\Qt}^\alpha
\]
\vspace{-4pt}
\noindent\ruleparblock For $\Qb = \Qt_1\|\Qt_2\|\dots\|\Qt_k$ in task $\Qt$, 
\begin{eqnarray*}
\frac{\bcc{\Qb;M;\kappa}}{\Sl{\Qt}^\alpha} = 
 \max\left\{ 
\begin{array}{lll}
 \max_i
\left\{ \frac{\bcc{\Qt_i;\MM;\kappa}}{\Sl{\Qt_i}^\alpha}
\right\}
 & \mbox{(depth dominated)}\\
\frac{\sum_i \bcc{\Qt_i;\MM;\kappa}}{\Sl{\Qt}^\alpha}
 & \mbox{(work dominated)}
\end{array}
\right.
\end{eqnarray*}
\vspace{-4pt}
\noindent\ruletask For $\Qt = c_1;c_2;\dots;c_k$, 
$
\frac{\bcc{\Qt;\MM;\kappa}}{\Sl{\Qt}^\alpha} 
= 
\sum_{i=1}^{k}\frac{\bcc{c_i;\MM;\kappa_i}}{\Sl{c_i}^\alpha},\quad
$
where $\kappa_i$ if $\kappa_i = \emptyset$ is $\Sl{\Qt}>M$ and if
$\Sl{\Qt}\leq M$, $\kappa_i = \kappa \cup_{j=1}^{i} \loc(c_j)$, where
$\loc(c_j)$ is the set of locations touched by $c_j$.
\vspace{-4pt}
\end{definition}
}

\section{Related Work and Comparison}\label{sec:relWork}

\para{Nested Parallelism, Complexity and Schedulers.}  In the analysis
of NP computations, theory usually considers two metrics: \emph{time
  complexity} and \emph{cache complexity}.  While some theoretical
analyses often consider these metrics separately, in reality, the
actual completion time of a program depends on both, since the cache
misses have a direct impact on the running time. Initial analyses of
schedulers for the NP model, such as the randomized work-stealing
scheduler~\cite{BlumofeLe99}, were based only on the time complexity
metric. While such analysis serves as a good indicator of scalability
and load-balancing abilities, better analyses and new schedulers that
minimize both communication costs and load balance in terms of time
and cache complexities on various parallel cache configurations have
been studied
\cite{AcarBlBl00,BlellochGiMa99,ChowdhurySiBl13,BlellochFiGi11}.

A major advantage of writing algorithms in the NP and ND models is
that it exposes locality and parallelism at different scales, making
it possible to design schedulers that can exploit parallelism and
locality in algorithms at different levels of the cache
hierarchy. Many divide-and-conquer parallel cache-oblivious algorithms
that can can achieve theoretically optimal bounds on cache complexity
(measured for the serial elision), work and span exist
\cite{BlellochGiSi10, ColeRa11}. For these NP algorithms, schedulers
can achieve optimal bounds on time and communication costs.

Another advantage of the NP (and ND) algorithms is that despite being
processor- and cache-oblivious, schedulers execute these algorithms
well with minimal tuning; the bounds are fairly robust across cache
sizes and processor counts.  Tuning of algorithms for time and/or
cache complexity has several disadvantages: first, the code structure
becomes more complicated; second, the parameter space to explore is
usually of exponential size; third, the tuned code is non-portable,
i.e., separate tuning is required for different hardware systems;
fourth, the tuned code may not be robust to variations and noise in
the running environment.  Recent work by Bender et
al. \cite{BenderEbFi14} showed that loop based codes are not
cache-adaptive, i.e., when the amount of cache available to an
algorithm can fluctuate, which is usually the case in a real-world
environment, the performance of tuned loop tiling based can suffer
significantly.  However, many runtimes and systems
(e.g. Halide~\cite{Halide13}) that map algorithms such as dense
numerical algebra, stencils and memoization algorithms to parallel
machines rely heavily on tuning as a means to extracting performance.

\para{Futures, Pipelines and other Synchronization Constructs.}  The
limitations of the NP model in expressing parallelism is known in the
parallel programming community. Several approaches, such as futures
\cite{BakerHe77, FriedmanWi78} and synchronization variables
\cite{BlellochGiMa97}, were proposed to express more general classes
of parallel programs.

Conceptually, the \textit{future} construct lets a piece of
computation run in parallel with the context containing it. The
pointer to future can then be passed to other threads and synchronize
at a later point. Several papers have studied the complexity of
executing programs with futures.  Greiner and Blelloch
\cite{GreinerBl99} discuss semantics, cost models and effective
evaluations strategies with bounds on the time complexity. Spoonhower
et al.  \cite{SpoonHowerBlGi09} calculate tight bounds on the locality
of work-stealing in programs with futures. The bounds show that moving
from a strict NP model to programs with futures can make WS schedulers
pay significant price in terms of locality. To alleviate this problem,
Herlihy and Liu \cite{HerlihyLi14} suggest that the cache locality of
future-parallel programs with work-stealing can be improved by
restricting the programs to using ``well-structured futures'': each
future is touched only once, either by the thread that created it, or
by a later descendant of the thread that created it. However, it is
difficult to express the algorithms in our paper as well-structured
futures without losing parallelism or locality.  One of the main
reasons for this is that the algorithm DAGs we consider have nodes
with multiple, even $O(n)$, outgoing dataflows which can not be easily
translated into ``touch-once'' futures. Even if we were to express
such DAGs with touch-once futures, the resultant DAG might be
unnecessarily serialized.  We seek to eliminate such artificial loss
of parallelism with the ND model.  Further, the analysis of schedulers
for programs with futures is limited to work-stealing, which is a less
than ideal candidate for multi-level cache hierarchies. To the best of
our knowledge, no provably good hierarchy-aware schedulers for
future-parallel programs exist.

\textit{Synchronization variables} are a more general form of
synchronization among threads in ``computation DAG'' and can be used
to implement futures.  Blelloch et al. \cite{BlellochGiMa97} present
the \textit{write-once synchronization variable}, which is a variable
(memory location) that can be written by one thread and read by any
number of other threads.  The paper also discusses an online
scheduling algorithm for a program with ``write-once synchronization
variables'' with efficient space and time bounds on the CRCW PRAM
model with the fetch-and-add primitive.

Though futures or synchronization variables provide a more relaxed
form of synchronization among threads in a computation DAG thus
exposing more parallelism, there are some key technical differences
between these approaches and the ND model.
First, the future construct fails to address the concept of ``partial
dependencies''. A thread computing a future is ``parallel'', not
``partially parallel'', to the thread touching the future. The runtime
always eagerly creates both threads before the future is computed,
thus possibly wasting asymptotically more space and incurring
asymptotically more cache misses.  In contrast, the ``$\fire$''
construct allows the runtime the flexibility of creating ``sink''
tasks as required when partial dependencies are met.
Second, there is no existing work on linguistic and runtime support
for the recursive construction and refinement of futures over spawn
trees.  While many dataflow programming models have been studied and
deployed in production over the last four
decades~\cite{JohnstonHaPaMi04}, the automatic recursive construction
of dataflow over the spawn tree, which is crucial in achieving
locality in a cache- and processor-oblivious fashion, is a new and
unique feature of our model.
Third, there are algorithms whose maximal parallelism can be easily
realized using the ``$\fire$'' construct but not with futures. In the
ND model, it is easy to describe algorithms in which a source can fire
multiple sink nodes, and a sink node can depend on multiple
sources. Such algorithms with nodes that involve multiple incoming and
outgoing dataflow arrows pose problems in future-parallelism
models. For instance, programming the LCS algorithm using futures
without introducing artificial dependencies is very cumbersome.  To
eliminate artificial dependencies, this class of problems requires
futures to be touched by descendants of the siblings of the node whose
descendant created the future. That is: the touching thread may be
created before the corresponding future thread is created.  To the
best of our knowledge, there is no easy scheme to pass the pointer to
a future up and down the spawn tree.

Another closely related extension of the nested parallel model is
``pipeline parallelism''. Pipeline parallelism can be constructed by
either futures (e.g. \cite{BlellochRe97} who used it to shorten span)
or synchronization variables, or by some elegantly defined linguistic
constructs \cite{LeeLeSc13}.  The key idea in pipeline parallelism is
to organize a parallel program as a linear sequence of stages.  Each
stage processes elements of a data stream, passing each processed data
element to the next stage, and then taking on a new element before the
subsequent stages have necessarily completed their processing.
Pipeline parallelism cannot express all the partial dependence
patterns described in this paper. To allow the expression of arbitrary
DAGs, interfaces for ``parallel task graphs'' and schedulers for them
have been studied \cite{JohnsonDaHa96, AgrawalLeSu10b}. While in
principle they can be used to construct computation DAGs that contain
arbitrary parallelism, the work flow is more or less similar to
dataflow computation without much emphasis on recursion, locality or
cache-obliviousness. The same limitation is true of pipeline
parallelism as well.

\para{Other algorithms, systems and schedulers.}
Parallel and cache-efficient algorithms for dynamic programming have
been extensively studied
(e.g.\cite{ChowdhuryRa10,GalilPa94,MalekiMuMy14,TangYoKa15}).  These
algorithms illustrate algorithms in which it is necessary to have
programming constructs that can express multiple (even $O(n)$)
dataflows at each node without serialization~\cite{GalilPa94}. The
necessity of wavefront scheduling and designs for it have been studied
in~\cite{MalekiMuMy14,TangYoKa15}.

Dynamic scheduling in dense numerical linear algebra on shared and
distributed memories, as well as GPUs, has been studied in the MAGMA
and PLASMA~\cite{PLASMA}, DPLASMA~\cite{DPLASMA}, and
PaRSEC~\cite{PaRSEC} systems. The programming interface used for these
systems is DaGUE~\cite{DaGUE}, which is supported by hierarchical
schedulers in runtimes~\cite{HierDaGUE,PaRSEC}. The DaGUE interface is
a slight relaxation of the NP model that allows recursive composition
of task DAGs representing dataflow within individual kernels. However,
the interface does not capture the notion of partial
dependencies. When DAGs of smaller kernels are composed to define
larger algorithms, the dependencies are either total or null.  The
ability to compose kernels with partial dependency patterns is key to
the ND model.

The FLAME project~\cite{FLAME} project, and subsequently the Elemental
project~\cite{Poulson13}, provides a systematic way of deriving
recursions and data dependencies in dense linear algebra from
high-level expressions ~\cite{Bientinesi05}, and using them to
generate data flow DAG scheduling ~\cite{Chan02Thesis}. The method
proposed in these works can be adapted to find the partial dependence
patterns derived by hand in this paper.

The Galois system developed at UT Austin \cite{PingaliNgKu11} 
proposes a data-centric formulation of algorithms called ``operator
formulation''. This formulation was initiated for handling irregular
parallel computation in which data dependencies can change at runtime,
and for irregular data structures such as graphs, trees and sets. 
In contrast, our approach was motivated by more regular parallel
computations such as divide-and-conquer algorithms. 

\section*{Acknowledgements}
We thank Prof. James Demmel, Dr. Shachar Itzhaky,
Prof. Charles Leiserson, Prof. Armando Solar-Lezama
and Prof. Katherine Yelick for valuable discussions
and their support in conducting this research.
Yuan Tang thanks Prof. Xiaoyang Wang, the dean of Software
School and School of Computer Science at Fudan University
for general help on research environment.
We thank the U.S. Department of Energy, Office of Science,
Office of Advanced Scientific Computing Research (DoE ASCR),
Applied Mathematics and Computer Science Program,
grants DE-SC0010200, DE-SC-0008700, and AC02-05CH11231,
for financial support, along with DARPA grant HR0011-12-2-0016,
ASPIRE Lab industrial sponsors and affiliates Intel, Google,
Huawei, LG, NVIDIA, Oracle, and Samsung, and MathWorks.

\bibliographystyle{abbrv}
\bibliography{allpapers}

\newpage
\punt{
\appput{app}{Appendix}
\input{gap}
\input{paren}
\input{ws-pseudocode}
\section{More Algorithms in the ND Model}
\label{app-algos}
\para{LU with Partial Pivoting.}
A straightforward parallelization of the $2$-way divide-and-conquer
algorithm by Toledo \cite{Toledo97}, combined with a replacement of
the \proc{TRS} algorithm by our new ND \proc{TRS}, yields an optimal
LU with partial pivoting algorithm for an $n\times m$ matrix with time
(span) bound $O(m \log n)$, and serial cache bound $O(nm^2/B\sqrt{M} +
n m + (n m \log m)/B)$ in the ideal cache model~\cite{FrigoLePr99}
with a cache of size $M$ and block size $B$.

}

\appendix

\section{Cache Complexity Calculations}
\label{app-calc}
\begin{claim}
The parallelizability of the recursive matrix multiplication algorithm 
in the NP model is  $\alpha_{max,MM} = 1-\log_{M}(1+c_{MM})$ for some
small constant $c_{MM}$.
\label{clm:CC-MM}
\end{claim}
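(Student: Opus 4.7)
The plan is to compute both the parallel cache complexity $\cc$ and the effective cache complexity $\bcc{\cdot}$ of the NP-model recursive matrix multiplication directly from the recursive structure of the spawn tree, and then identify the largest $\alpha$ for which $\bcc{M} \le c_U \cc(M)$ holds with a universal constant $c_U$. The starting point is the standard cache-oblivious analysis: for an $n\times n$ \proc{MM} task with $n^2>M$, the 8-way decomposition gives $\cc(\proc{MM}(n);M) = \Theta(n^3/\sqrt{M})$, with the glue-node contribution $O(n^3/M^{3/2})$ being a strictly lower-order term.

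For the ECC, I would set up a one-level recurrence on the effective depth $D(n) := \bcc{\proc{MM}(n);M}/S(\proc{MM}(n))^\alpha = \bcc{\proc{MM}(n);M}/n^{2\alpha}$. In the NP model, $\proc{MM}(n)$ is the serial composition of two parallel groups of four $\proc{MM}(n/2)$ subtasks. Applying Definition~\ref{def:qhat-fire}, the depth-dominated term contributes $2\,D(n/2)$ — any chain visits exactly one subtask in each of the two serially composed halves — while the work-dominated term contributes $8\,\bcc{\proc{MM}(n/2);M}/n^{2\alpha} = 2\cdot 4^{1-\alpha}D(n/2)$. Thus
\[
D(n) \;=\; \max\bigl\{2,\; 2\cdot 4^{1-\alpha}\bigr\}\cdot D(n/2),
\]
with base case $D(\sqrt{M}) = \lceil M^{1-\alpha}\rceil$. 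For $\alpha\le 1$ the work-dominated argument dominates, and unrolling across the $\log_2(n/\sqrt{M})$ levels yields $D(n) = (n/\sqrt{M})^{3-2\alpha}\,M^{1-\alpha}$, so that $\bcc{\proc{MM}(n);M} = n^{2\alpha}D(n) = \Theta(n^3/\sqrt{M})$, matching PCC up to constants.

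The precise threshold $\alpha_{max,MM} = 1-\log_M(1+c_{MM})$ then emerges from the interaction of the base-case ceiling with glue-node overhead. Approximating $\lceil M^{1-\alpha}\rceil$ by $M^{1-\alpha}$ is legitimate only while $M^{1-\alpha}\ge 1+c_{MM}$ for a small positive constant $c_{MM}$ chosen to absorb the additive per-level overhead; this is exactly the condition $\alpha \le 1 - \log_M(1+c_{MM})$. Within this range the ratio $\bcc{\cdot}/\cc(\cdot)$ stays bounded by a universal $c_U$ independent of $n$ and $M$. Once $\alpha$ crosses above $1-\log_M(1+c_{MM})$, the base-case ceiling saturates at $1$, the depth-dominated term inflates to $\Theta(n/\sqrt{M})$, and $\bcc{\proc{MM}(n);M} = \Theta(n^{2\alpha+1}/\sqrt{M})$, giving a ratio $\bcc/\cc = \Theta(n^{2\alpha-2})$ that is unbounded in $n$.

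The main obstacle is tracking the ceilings and the lower-order glue-node contributions carefully enough to pin down the exact $\log_M(1+c_{MM})$ correction rather than the looser conclusion ``$\alpha<1$''. The geometric unrolling of the recurrence and the PCC computation are routine; the delicate step is identifying the precise $c_{MM}$ — essentially the smallest constant that simultaneously absorbs the fractional slack in the $\max\{2, 2\cdot 4^{1-\alpha}\}$ recurrence near the transition and the additive glue-node cost per level — at which the depth-dominated term ceases to be subsumed by the work-dominated one.
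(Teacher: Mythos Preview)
Your approach mirrors the paper's: both set up a one-step recurrence for the ECC of \proc{MM}, take the work-dominated branch for $\alpha$ below the threshold, unroll, and compare to the PCC $\Theta(n^3/\sqrt{M})$. The paper parametrizes by total footprint $3N$ (with $N=n^2$), writes
\[
\bcc{3N;M} \;=\; c\,(3N)^\alpha \;+\; \max\{4^\alpha,8\}\cdot\bcc{3N/4;M},
\]
retains the additive per-level term $c(3N)^\alpha$, and unrolls to $\bcc{3N;M} = \bigl(1 + O(M^{-(1-\alpha)})\bigr)\,(3N)^{1.5}/M^{0.5}$. The threshold then falls out directly from bounding the $O(M^{-(1-\alpha)})$ correction by a constant.

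The gap in your write-up is the mechanism you name for the $\log_M(1+c_{MM})$ correction. It does not come from the base-case ceiling: for any $\alpha<1$ one has $M^{1-\alpha}>1$, hence $\lceil M^{1-\alpha}\rceil \le 2M^{1-\alpha}$, and that rounding is absorbed into $c_U$ at no asymptotic cost. The correction instead arises from the additive per-level glue-node cost (equivalently, the per-level ceiling slack in the recursive form of the ECC) that your recurrence $D(n)=\max\{2,\,2\cdot4^{1-\alpha}\}\,D(n/2)$ drops entirely; once included, it sums geometrically to a $\Theta(M^{-(1-\alpha)})$ relative correction, and the condition $M^{1-\alpha}\ge 1+c_{MM}$ is exactly what keeps this below $c_U-1$. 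Your description of what happens just above the threshold is correspondingly off: for $\alpha$ in the interval $\bigl(1-\log_M(1+c_{MM}),\,1\bigr)$ the work-dominated branch still controls and the base-case effective depth does not ``saturate at~$1$''; the ECC-to-PCC ratio stays bounded but simply exceeds the chosen $c_U$. The depth-dominated inflation to $\Theta(n^{2\alpha+1}/\sqrt{M})$ that you describe does not set in until $\alpha\ge 1$.
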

\begin{proof}
For multiplying $N = n\times n$ matrices (which takes $3N$ space):
\begin{align*}
& \bccM{3N;M}  \\
& =  c(3N)^\alpha +  \max\{4^\alpha, 8\}\cdot
\bccM{3N/4;M}\\
& =  c\cdot N^\alpha \left( 3^\alpha \right) + 8 \cdot \bccM{3N/4;M},
\quad \textrm{for}\quad \alpha<1.5\\
& =  c\left( \frac{12^\alpha}{8\cdot 3^\alpha - 12^\alpha} \right)
\cdot
\left( \frac{(3N)^{1.5}}{M^{1.5-\alpha}} - N^\alpha\right)
+ \bccM{M;M} \left(\frac{3N}{M}\right)^{1.5} \\
& =  c\left( \frac{12^\alpha}{8\cdot 3^\alpha - 12^\alpha} \right)
\cdot
\left( \frac{(3N)^{1.5}}{M^{1.5-\alpha}} - N^\alpha\right)
+ \frac{(3N)^{1.5}}{M^{1.5-\alpha}},
\end{align*}
\noindent
assuming for simplicity that $3N$ is a power-of-$2$ multiple of $M$.
Since $\cc_{MM}(N;M) = O({N^{1.5}}/{M^{0.5}})$, we have $\bccM{n;M}
\leq c_U \cc_M(3n;M)$ when $\alpha \leq 1-\log_{M}(1+c_{MM})$ for some
small constant $c_{MM}$. Therefore, $\alpha_{max,MM} = 1-\log_{M}(1+c_{MM})$
is the parallelizability of the algorithm in the NP model.
\end{proof}

\begin{claim}
The parallelizability of the recursive TRS algorithm in~\eqref{TRS-NP}
in the NP model is  $\alpha_{max,TRS} = 1-\log_{\min\{N/M,M\}}(1+c_{TRS})$ for some
small constant $c_{TRS}$.
\label{clm:CC-TRS}
\end{claim}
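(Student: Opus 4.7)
The plan is to mirror the template of the proof of Claim CC-MM by setting up and solving a recurrence for the effective cache complexity $\bccM{\cdot;M}$ of the NP-model TRS algorithm in equation TRS-NP. At the top level, TRS decomposes as (Left) $\serial$ (Right), where Left is the parallel composition of two serial chains (TRS $\serial$ MMS) at quadrant size $n/2$, and Right is the parallel composition of two TRS at quadrant size $n/2$. Applying the ECC definition, the outer serial composition yields an additive combination of the two blocks, and each inner parallel block produces the maximum of a work-dominated term and a depth-dominated term.

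First I would treat Left and Right separately. The Right block is structurally analogous to a block in the MM recursion: two independent TRS subproblems in parallel, contributing a single recursive TRS cost. The Left block is where the new behavior appears: each of its parallel branches internally serializes a TRS and an MMS at size $n/2$, so the critical path through Left is a full TRS call followed by an MMS call. Combining Left and Right, the critical path through TRS at size $n$ traverses two TRS at size $n/2$ (one from each block) with an MMS at size $n/2$ in between, yielding a schematic depth-dominated recurrence
\[
T_{\mathrm{depth}}(n) \;=\; 2\,T_{\mathrm{depth}}(n/2) \;+\; M_{\mathrm{depth}}(n/2) \;+\; (\text{local overhead}),
\]
while the work-dominated recurrence unfolds into the usual $O(n^3/M^{0.5})$ PCC when reduced to $M$-maximal leaves.

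Next I would unroll the depth recurrence down to $M$-maximal leaves (at size $\sqrt{M}\times\sqrt{M}$). The coefficient $2$ on $T_{\mathrm{depth}}(n/2)$, over $\log_2(n/\sqrt{M})$ levels of unrolling, produces an extra logarithmic factor: the number of $M$-maximal TRS tasks on the longest chain is $\Theta((n/\sqrt{M})\log(n/\sqrt{M}))$, as opposed to $\Theta(n/\sqrt{M})$ in the MM case. Each $M$-maximal leaf contributes $M^{1-\alpha}$ to the effective depth, so requiring the depth-dominated value to stay within a constant factor of the work-dominated value gives
\[
(n/\sqrt{M})\,\log(n/\sqrt{M})\cdot M^{1-\alpha} \;\le\; c_U\,n^{3-2\alpha}/\sqrt{M},
\]
which (setting $\epsilon = 1-\alpha$, $N = n^2$) is satisfied exactly when $\epsilon$ is at least of order $\log_{N/M}(1+c')$. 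Combined with the constraint $\epsilon \gtrsim \log_M(1+c_{MM})$ that propagates into the analysis from Claim CC-MM through the embedded MMS subcalls, the binding condition is $\epsilon \gtrsim \log_{\min\{N/M,M\}}(1+c_{TRS})$, giving the claimed $\alpha_{max,TRS} = 1 - \log_{\min\{N/M,M\}}(1+c_{TRS})$.

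The main obstacle will be bookkeeping the normalization factors $S(\Qt)^\alpha$ along the recurrence: Left and Right have different footprints (Left touches essentially the full $2N$, Right roughly $N$), and the depth- and work-dominated branches must be combined carefully at every level before the recursion can be solved cleanly. A secondary subtlety is ensuring that the MMS contribution to the critical path is properly absorbed into the geometric series instead of dominating it; this is exactly where Claim CC-MM is invoked, to certify that each MMS inside TRS is well-behaved for all $\alpha < \alpha_{max,MM}$ so that the $\min\{N/M,M\}$ (and not a worse denominator) appears in the final bound.
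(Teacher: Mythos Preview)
Your plan is in the right spirit but differs from the paper's proof in framing, and it has a concrete error in the final step.

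\textbf{Difference in approach.} The paper does not separate out a ``depth recurrence'' on chain lengths of $M$-maximal tasks. It writes a single recurrence for $\bccT{3N/2;M}$ using the recursive NP-model ECC composition rules,
\[
\bccT{3N/2;M} \;=\; c(3N/2)^\alpha \;+\; 2\cdot\max\{4^\alpha,2\}\cdot\bccT{3N/8;M} \;+\; 2\,\bccM{3N/4;M},
\]
substitutes the bound $\bccM{3N/4;M}=k_{MM}(3N/4)^{1.5}/M^{0.5}$ from the matrix-multiply claim (valid only when $\alpha<\alpha_{max,MM}$), and unrolls the linear recurrence down to size $M$. It then compares each of the three resulting closed-form terms against $\cc_{TRS}(3N/2;M)=O((3N/2)^{1.5}/M^{0.5})$ and reads off $\alpha_{max,TRS}$ from that comparison. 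The coefficient $2\cdot 4^\alpha$ (for $\alpha>1/2$) arises because the recursive ECC rule scales by $(\Sl{\text{parent}}/\Sl{\text{child}})^\alpha=4^\alpha$ at each parallel block; the governing quantity in the paper is the geometric ratio $2\cdot 4^{\alpha-1.5}=4^{\alpha-1}$ of the accumulated $\bccM{}$ contributions, not a raw chain count. Your chain-counting recovers the same depth-dominated information but packages it differently, and it drops the per-level glue overhead term $c(3N/2)^\alpha$ that the paper carries.

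\textbf{The gap.} Your last simplification does not yield what you assert. Starting from
\[
(n/\sqrt{M})\,\log(n/\sqrt{M})\cdot M^{1-\alpha}\;\le\;c_U\,n^{3-2\alpha}/\sqrt{M},
\]
and setting $N=n^2$, $\epsilon=1-\alpha$, one gets (up to constants) $\log(N/M)\le c_U\,(N/M)^\epsilon$. The smallest $\epsilon$ satisfying this is of order $\log_{N/M}\!\bigl(\log(N/M)\bigr)$, \emph{not} $\log_{N/M}(1+c')$ for a fixed constant $c'$: you need $(N/M)^\epsilon$ to dominate a logarithm, not a constant. So the chain-counting route, as written, arrives at $\alpha_{max}\approx 1-\log_{N/M}\!\bigl(\log(N/M)\bigr)$, which is strictly smaller than the value in the claim and does not by itself produce the $\min\{N/M,M\}$ structure. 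The paper, by contrast, lands on the stated form by combining the term-by-term comparison of the unrolled recurrence with the inherited constraint $\alpha<\alpha_{max,MM}=1-\log_M(1+c_{MM})$; it is this combination (the paper leaves the final comparison implicit) that is supposed to give the $\min\{N/M,M\}$ base. Your outline gestures at the second constraint but does not show how the two interact to give the claimed expression.
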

\begin{proof}
We have for $T$ upper triangular and
the right hand side $B$ of size $N = n\times n$ that is overwritten by $X$:
\begin{align*}
& \bccT{3N/2;M}  \\
& =  c\left(\frac{3N}{2}\right)^\alpha
 +  2\cdot \max\{4^\alpha, 2\}\cdot \bccT{3N/8;M}
 +  2\cdot \bccM{3N/4;M} \\
& = c\left(\frac{3N}{2}\right)^\alpha + 2k_{MM} \frac{(3N/4)^{1.5}}{M^{0.5}}
 +  2\cdot 4^\alpha\cdot \bccT{3N/8;M},\\
  &\quad\quad \textrm{for}\quad \alpha>0.5,\quad k_{MM}\ \textrm{is a constant assicated with matrix multiply} \\
& = c\left(\frac{3N}{2}\right)^\alpha\left(2^{1+\log_4{3N/2M}}-1\right)\\
& \quad\quad  + 2k_{MM} \frac{(3N/4)^{1.5}}{M^{0.5}}\left( (2\cdot4^{\alpha-1.5})^{1+\log_4{3N/2M}}-1\right)\\
&\quad\quad +  2\cdot (4^\alpha)^{\log_4 {3N/2M}}\cdot \bccT{M;M}, \\ 
& = c\left(\frac{3N}{2}\right)^\alpha\left(2\left(\frac{3N}{2M}\right)^{0.5}-1\right)\\
&\quad\quad  + 2k_{MM} \frac{(3N/4)^{1.5}}{M^{0.5}}\left(4^{\alpha-1}\left(\frac{3N}{2M}\right)^{\alpha-1}-1 \right)\\
&\quad\quad +  2 \left(\frac{3N}{2M}\right)^\alpha M, 
\end{align*}
assuming, for simplicity, that $3N$ is a power-of-$2$ multiple of $M$.
Comparing $\bccT{3N/2;M}$ with $\cc_{TRS}(3N/2;M)$, which is
$O((3N/2)^{1.5}/M^{0.5})$ for $3N/2>M$,
gives a parallelizability of $1-\log_{\min\{N/M,M\}}(1+c_{TRS})$.
\end{proof}

\end{document}